\algnewcommand\algorithmicinput{\textbf{INPUT:}}
\algnewcommand\INPUT{\item[\algorithmicinput]}
\algnewcommand\algorithmicoutput{\textbf{OUTPUT:}}
\algnewcommand\OUTPUT{\item[\algorithmicoutput]}
\DeclareMathAlphabet{\mathpzc}{OT1}{pzc}{m}{it}
\newtheorem{theorem}{Theorem}
\date{\vspace{-5ex}}
\newcommand{\Mypm}{\mathbin{\tikz [x=1.4ex,y=1.4ex,line width=.1ex] \draw (0.0,0) -- (1.0,0) (0.5,0.08) -- (0.5,0.92) (0.0,0.5) -- (1.0,0.5);}}
\newcolumntype{C}{>{\centering\arraybackslash}X}
\author{Marcos Matabuena$^{1}$ and Oscar Hernan Madrid Padilla$^{2}$ \\
		$^{1}$CiTIUS (Centro Singular de Investigaci\'{o}n en Tecnolox\'{i}as Intelixentes) \\  $^{2}$Department of Statistics, University of California Los Angeles \\
	$^{1}$\url{marcos.matabuena@usc.es}, $^{2}$\url{oscar.madrid@stat.ucla.edu}}
\title{Energy distance and kernel mean embeddings for two-sample survival testing}
\date{\today}
\begin{document}
\maketitle
\begin{abstract}
We study the comparison problem  of  distribution equality between two random samples under a right censoring scheme. To address this problem,  we design a series of tests based on  energy distance and kernel mean embeddings. We calibrate our tests using permutation methods and prove that they are consistent against all fixed continuous
alternatives. To evaluate our proposed  tests, we simulate survival curves from previous clinical trials. Additionally, we provide practitioners with a set of recommendations on how to select parameters/distances for the delay effect problem.  Based on the method for parameter tunning that we propose, we show that our tests demonstrate a considerable gain of statistical power against classical survival tests. 
	\vskip 5mm
	\textbf{Keywords}:
	Survival analysis, Two-sample test, Nonparametric statistics,  Immunotherapy clinical trials 
\end{abstract}

\section{Introduction}
 One of the main objectives of survival analysis is to compare the distributions of  the lifetime of two populations. This is best illustrated by means of clinical trials when evaluating the efficacy of two treatments \cite{biblsingh2011survival}. In the context of right censored data, the scientific community tends to use the log-rank test  to testing the equality between two distribution curves. Originally  proposed by \cite{mantel1959statistical}, the log-rank test  has further been studied  by different  authors, e.g.,  \cite{schoenfeld1981asymptotic,fleming2011counting}. Importantly,  the   log-rank test is known to be the most powerful test when the hazard functions are proportional to each other (\cite{schoenfeld1981asymptotic}). However, when this hypothesis is violated, the test has a significant loss of power \cite{fleming1980modified,lachin1986evaluation,lakatos1988sample,schoenfeld1981asymptotic}.


  Currently, an important area  of statistical  research is searching for new tests that guarantee high statistical power in real use-cases  where log-rank test does not perform well. We refer the reader  to  \cite{su2018time} where  the authors thoroughly  discuss a lack of statistical power of the log-rank test found in numerous  case studies. Recent cancer immunotherapy trials also provide a relevant example. These  consist of situations 
   where  treatments  may present a delayed effect \cite{melero2014therapeutic,xu2017designing,xu2018designing,su2018time,alexander2018hazards}.
  
  
In the right censoring survival setting, we  distinguish two  different  types of tests: directionals and  omnibus. Loosely speaking, the former seek to obtain maximum power in specific scenarios, while the latter are consistent against all alternatives. Examples of directional tests are  the log-rank test family, see e.g  \cite{gehan1965generalized,tarone1977distribution,peto1972asymptotically,fleming1981class}, where  statistics are assigned a weight function that determines the optimality in certain directions.  Other approaches include combinations of  tests, such  as those in \cite{bathke2009combined} and \cite{yang2010improved}.

From a theoretical point of view, omnibus tests are often preferred over directionals due to their ability to detect any alternative asymptotically. However, in practice, these tests have the disadvantage  that they may   have low local power  versus a wide variety of alternatives. In addition, it is known
 that any test with finite samples can have high power only in a limited number of scenarios.  In particular, \cite{janssen2000global} proves that there exists no test with high power, except in a finite dimensional space. 


In the era of precision medicine, see \cite{kosorok2019precision} for a review,  drugs are designed to be personalized. This makes  the statistical  analysis  of  treatment  differences particularly  challenging. For example,  a comparison of two treatments in a group of individuals may present highly heterogeneous survival curves due to significant individual variability in response to the treatment. A particular instance of this can be seen in  immunotherapy studies \cite{ferris2016nivolumab}  (Figure $1$, Image $B$), where  the survival curves intersect several times.

In this paper,  we propose a novel approach for the  two-sample  testing problem under right  censoring. Our approach  relies on   energy distance
 (\citealt{szekely2003statistics} and \citealt{szekely2013energy}) and  maximum mean discrepancy estimation (\citealt{gretton2012kernel}).  We summarize  our  contributions next.



\subsection{Summary of results}
\label{sec:summary}

Formally,  we consider  the  classical traditional framework of two-sample  survival comparisons where we are given  lifetimes $T_{j,i} \sim P_j$ $(j=0,1; i=1,\dots,n_j)$ and censoring times $C_{j,i} \sim$ $Q_j$ $(j=0,1; i=1,\dots,n_j)$, with distributions   $P_j$ and $Q_j$  $(j=0,1)$, defined in a subset of $\mathbb{R^{+}}$. Here,  the index   $j$  represents  a population, and the index  $i$  a particular  sample within a   population. Moreover, the random variables  $T_{0,1},\dots,T_{0,n_0},\dots,T_{1,1},\dots,T_{1,n_1},$ $ C_{0,1},\dots,C_{0,n_0},\dots,C_{1,1},\dots,C_{1,n_1}$ are assumed to be independent of each other. In practice, only the random variables  $X_{j,i}= min(T_{j,i},C_{j,i})$  and $\delta_{j,i}= 1\{X_{j,i} = T_{j,i}\}$	$(j=0,1; i=1,\dots,n_j)$ are observed. 

On the basis  of the  observed  data $\{(X_{j,i},\delta_{j,i})\}_{j=0,1; i=1,\dots,n_j}$,  the two-sample testing  problem that we study can be formulated as
\begin{equation}
\label{eqn:model}
	 H_0\,:\, P_0(t)  =  P_1(t), \,\,\forall  t> 0,\,\,\,\,\,\,\,\text{versus}\,\,\,\,\,\,\,  H_A\,:\, P_0(t)  \neq   P_1(t), \,\,\text{for some}\,\,\, t> 0.
\end{equation}

Our main contributions are the following:

\begin{itemize}
	\item We propose  novel  tests  based  on   energy distance and maximun mean discrepancy. The resulting  tests require minimum assumptions, involving only conditions on the moments of random variables. Specifically,  we  assume $ E(T_{j,i}^{2})<\infty $ and $ E(C_{j,i}^ {2}) <\infty$, and, for simplicity, that the variables $X_{j,i}, T_{j,i},C_{j,i}$ $(j=0,1; i=1,\dots,n_j)$ are continuous. 

	\item Importantly,  we show   that  the proposed  tests  are  consistent  against all alternatives. In addition, we   present  a permutation-based  procedure to  approximate the distribution of our test statistics under the null hypothesis.
	
	\item We provide guidance on how to tune parameters of our proposed  tests  in clinical situations of interest. Furthermore, we show that Gaussian and Laplacian kernels  outperform energy distance with euclidean distance  and other tests of the logrank family, in settings where there is  a delay effect, a commonly  found situation in contemporary clinical trials.

\end{itemize}

Finally, we extend the proposed method to the multivariate case (appendix \ref{appendix:D}) and  demonstrate the theoretical properties of the proposed  statistics  (appendix \ref{appendix:B}). In particular, we show that these statistics behave as  true distances between  samples.



\subsection{Outline}


 The structure of the paper is as follows. Section \ref{sec:background}  provides an introduction to energy distance-based methods. Next, in Section \ref{estadisticos}
  the statistics for our tests are derived, establishing their connections  with previous  work on two-sample  testing  based on kernel methods. Subsequently, we propose a permutation method and some recommendations on how to choose the tests parameters. In Section \ref{sec:theory},  we show that our tests are consistent  against all alternatives. Section \ref{sec:simulation} then provides a simulation study to compare the behavior of the proposed  tests against state-of the-art methods. To this end, we  compare the type $I$ error using known distributions. In addition, we consider real scenarios from clinical practice and  evaluate performance based  on  the power of the tests.  Finally, the validity of our methods is  verified  in practice using the previously collected data     (\cite{stablein1981analysis}).

 In order to increase readability of the present document, we place the proofs of the main theoretical contributions and complementary results in the appendices.

\section{Background on energy  distance}
\label{sec:background}

To arrive at our  family of tests,  we first  recall some background on energy distance. To that end, let  $X$,$X^{\prime}\sim^{i.i.d. } P$ and $Y$,$Y^{\prime}\sim^{i.i.d.}Q$  where  
$P$ and  $Q$ are probability  distribution functions in $\mathbb{R}^{d}$. Denoting  by $\|\cdot\| $ the Euclidean distance in $\mathbb{R}^d$ and assuming that $\max\{E(||X||), E(||Y||)\}< \infty$, the  energy distance  between the distributions $P$ and $Q$ is defined, as in \cite{szekely2003statistics} and \cite{szekely2013energy},  by:
\begin{equation}
\label{eqn:def1}
\epsilon(P,Q)= 2E||X-Y||- E||X-X^{'}||-E||Y-Y^{'}||.
\end{equation}\\
It is fairly easy to see that  $\epsilon(\cdot, \cdot )$  is invariant to rotations, non-negative, and $\epsilon(P,Q)= 0$ if and only if $P=  Q$. In addition, 	
(\ref{eqn:def1}) can be extended for a family of parameters $\alpha\in(0,2]$ assuming in each case the existence of the moment of order $\alpha$, see \cite{szekely2013energy}. The corresponding  $\alpha$-energy distance is then given as
\begin{equation}
\label{eqn:def2}
\epsilon_\alpha(P,Q)= 2E||X-Y||^{\alpha}- E||X-X^{'}||^{\alpha}-E||Y-Y^{'}||^{\alpha}.
\end{equation}

It can be proved that $\epsilon_\alpha(P,Q)\geq 0$. Furthermore, $\epsilon_{\alpha}(P,Q)=0$ if and only if $P=Q$. In the case of $\alpha=2$, $\epsilon_2(P,Q)= 2||E(X)-E(Y)||^{2}$. Therefore, non-negativity is verified trivially, although 	$\epsilon_2(P,Q)=0$ implies equality in means and not that $P = Q$.

For  a  characteristic  kernel  $K \,:\, \mathbb{R}^{d} \times \mathbb{R}^{d}  \rightarrow \mathbb{R}$  using properties of kernel mean embeddings \cite{muandet2017kernel}, as in \cite{gretton2012kernel}, we define the  measure of maximum mean discrepancy (MMD) as 
\begin{equation}
	\label{eqn:mdd}
	\gamma_K^{2}(P,Q)= E(K(X,X^{'}))+E(K(Y,Y^{'}))-2E(K(X,Y)),
\end{equation}
	where $X$,$X^{\prime}\sim^{\text{i.i.d.}} P$ and $Y$,$Y^{\prime}\sim^{\text{i.i.d.}} Q$.  Intuitively, (\ref{eqn:mdd}) can be thought of  as  non-linear generalization of the energy distance (\ref{eqn:def1}) in an appropriate  reproducing  kernel Hilbert space (RKHS).  The latter  depends on the selected parameters/distances.

Following this line, if we consider the  energy distance in metric spaces \cite{lyons2013distance} (with an arbitrary semi-metric of negative type instead of the Euclidean distance), we find it equivalent to the kernel methods just defined. This equivalence was established in \cite{sejdinovic2013equivalence} and \cite{shen2018exact}, at both the population and sample level.

	Finally, some typical  examples  of characteristics kernel 	\cite{sriperumbudur2010hilbert}  are provided in Table \ref{tab:tabla1}.

	\begin{table}[t!]
		\begin{center}
			\caption{	\label{tab:tabla1}	\textbf{Characteristics kernels}.
				$\Gamma(\cdot)$ denotes the Gamma function and $K_v$ is the modified Bessel
				function of the second order
				$v$ (see explicit definitions in the Appendix \ref{appendix:F})}
			\label{tab:table1}
			\begin{tabular}{c|c} 
				\textbf{Kernel Function} & \textbf{$K(x,y)$} \\ 
				\hline
				Gaussian & $\exp((\frac{||x-y||}{\sigma})^2), \sigma>0$ \\ 
				Laplacian &  $\exp(\frac{|x-y|}{\sigma}),   \sigma>0$ \\
				Rational quadratic & $(||x-y||+c)^{-\beta}$, $\beta,\alpha>0$ \\
				Mattern   & $\frac{2^{1-v}}{\Gamma(v)}(\frac{\sqrt{2v}||x-y||}{\sigma})K_v(\frac{\sqrt{2v}||x-y||}{\sigma})$ \\
				
			\end{tabular}
		\end{center}
	\end{table}


\section{Methodology}
\label{estadisticos}

In this  section,  we present  a new family of tests which are the   focus of this paper. We begin by providing  constructions of the statistics  that are the pillars of our tests. Then, we present  a procedure  for determining the  distribution of the statistics under the null  hypothesis.

\subsection{Construction of statistics}
\label{tiempos}

In the context of right censoring with independent  data, the maximum non-parametric 
likelihood approach is the Kaplan-Meier  estimator  originally introduced in \cite{kaplan1958nonparametric}. Notably, the Kaplan-Meier  estimator is consistent (\cite{wang1987note}) and  its  asymptotic  properties were studied in  \cite{cai1998asymptotic}.  However, \cite{stute1994bias}  showed that the Kaplan-Meier estimator  suffered from negative  bias, which can be  large under high censoring.

To proceed with our  construction,  we  exploit  the Kaplain-Meier  estimator,  combining it  with a  kernel type of estimator  based on energy distance. To this end,
for each group $j \in \{0,1\}$, we consider its ordered sample $$X_{j,(1:n_j)}  <  X_{j,(2:n_j)}<\dots <X_{j,(n_j:n_j)},$$
and  the corresponding censored indicators $\delta_{j,(1:n_j)}$, $\delta_{j,(2:n_j)},\dots \delta_{j,(n_j:n_j)}$. In addition, we refer to the maximum possible lifetimes  for each group as  $\tau_0$ and $\tau_1$ respectively.


With the above notation  in hand, we motivate  the   definition  of our statistics. First,  if we knew the distributions $P_0$ and $P_1$, then we could  calculate the metrics defined in (\ref{eqn:def2})  or (\ref{eqn:mdd})  to measure  the  distance  between the two populations. Since these distributions are not available, it is then natural to  estimate   them with the Kaplan-Meier estimator and use a  sample version  of the distances    (\ref{eqn:def2})  or (\ref{eqn:mdd}). This leads to 
an   energy distance statistic under right censoring:
\begin{gather}\label{sin normalizarr}
\begin{align}
\tilde{\epsilon}_{\alpha}(P_0,P_1)= 2\sum_{i=1}^{n_0}  \sum_{j=1}^{n_1} W^{0}_{i:n_0} W^{1}_{j:n_1} ||X_{0,(i:n_0)}-X_{1,(j:n_1)}||^{\alpha} - \sum_{i=1}^{n_0} \sum_{j=1}^{n_0}  W^{0}_{i:n_0} W^{0}_{j:n_0}  ||X_{0,(i:n_0)}-X_{0,(j:n_0)}||^{\alpha} \\ - \sum_{i=1}^{n_1}  \sum_{j=1}^{n_1} W^{1}_{i:n_1} W^{1}_{j:n_1} ||X_{1,(i:n_1)}-X_{1,(j:n_1)}||^{\alpha} \nonumber,
\end{align}
\end{gather}
and a kernel  statistic under right censoring:
\begin{gather}
\begin{align}
\tilde{\gamma}_K^{2}(P_0,P_1)=  \sum_{i=1}^{n_0}  \sum_{j=1}^{n_0} W^{0}_{i:n_0} W^{0}_{j:n_0} K(X_{0,(i:n_0)},X_{0,(j:n_0)})+ \sum_{i=1}^{n_1}  \sum_{j=1 }^{n_1} W^{1}_{i:n_1} W^{1}_{j:n_1} K(X_{1,(i:n_1)},X_{1,(j:n_j)})\\-2\sum_{i=1}^{n_0}  \sum_{j=1}^{n_1} W^{0}_{i:n_0} W^{1}_{i:n_1} K(X_{0,(i:n_0)},X_{1,(j:n_1)}),\nonumber
\end{align}
\end{gather}

where

\begin{equation}
W^{0}_{{i:n_0}}= \frac{\delta_{0,(i:n_0)}}{n_0-i+1}\prod_{j=1}^{i-1}[\frac{n_0-j}{n_0-j+1}]^{\delta_{0,(j:n_0)}}
\hspace{0.2cm}   (i=1,\dots,n_0),
\end{equation}

and

\begin{equation}
W^{1}_{{i:n_1}}= \frac{\delta_{1,(i:n_1)}}{n_1-i+1}\prod_{j=1}^{i-1}[\frac{n_1-j}{n_1-j+1}]^{\delta_{1,(j:n_1)}} \hspace{0.2cm}   (i=1,\dots,n_1)
\end{equation}are the Kaplan-Meier  weights from \cite{stute1995statistical}. While the statistics  $\tilde{\epsilon}_{\alpha}(P_0,P_1)$ and  $\tilde{\gamma}_K^{2}(P_0,P_1)$ seem to capture the differences between two populations, it is possible to prove that, almost surely, $\tilde{\epsilon}_{\alpha}(P_0,P_1)$ and  $\tilde{\gamma}_K^{2}(P_0,P_1)$  converge to quantities $\gamma_{c(K)}(P_0,P_1)$ and $\epsilon_{c(\alpha)}(P_0,P_1)$, respectively. However, they do not behave like distances between probability distributions. Specifically,  there exist two  different  probability  distributions $P_0$ and $P_1$    in $\mathbb{R} $  satisfying  $\epsilon_{c(\alpha)}(P_0, P_1)  <0$. We can also find  two  different  probability  distributions $P_0$ and $P_1$    in $\mathbb{R} $ with  $\epsilon_{c(\alpha)}(P_0, P_1)=0$. We refer the reader to the appendix \ref{appendix:C} for  specific  constructions of these examples.

The reason  behind the odd behavior of the statistic $\tilde{\gamma}_K^{2}(P_0,P_1)$  ($\tilde{\epsilon}_{\alpha}(P_0,P_1)$) has to do with the fact that   $P_l$ is not completely supported in $[0,\tau_l]$, for $l  \in \{0,1\}$. We alleviate this  problem by defining  the conditional distributions  $P^{\prime}_0 (x)= P_0(x)/ \int_{0}^{\tau_0} dP_0(x)dx$ $\forall x\in [0,\tau_0]$, and  $P^{\prime}_1 (x)= P_1 (x)/ \int_{0}^{\tau_1} dP_1 (x)dx $ $\forall x\in [0,\tau_1]$. With $P_0^{\prime}$ and $P_1{^\prime}$ at hand,  we  construct   conditional  versions of the weights $W^{l}_{i:n_l}$ $(l=0,1; i=1,\dots,n_j)$. Specifically, we consider the $U$-statistics under right censoring suggested in \cite {bose1999strong} and apply the aforementioned standardization, following \cite{stute1993multi}. The resulting statistics are:

\begin{gather}\label{eqn:estadistico1}
\begin{align}
\tilde{\epsilon}_{\alpha}(P_0,P_1)= 2  \frac{\sum_{i=1}^{n_0}  \sum_{j=1}^{n_1} W^{0}_{i:n_0} W^{1}_{i:n_1} ||X_{0,(i:n_0)}-X_{1,(j:n_1)}||^{\alpha}}{\sum_{i=1}^{n_0}  \sum_{j=1}^{n_1} W^{0}_{i:n_0} W^{1}_{j:n_1}} - \frac{\sum_{i=1}^{n_0} \sum_{j\neq i}^{n_0}  W^{0}_{i:n_0} W^{0}_{j:n_0}  ||X_{0,(i:n_0)}-X_{0,(j:n_0)}||^{\alpha}}{\sum_{i=1}^{n_0} \sum_{j\neq i}^{n_0} W^{0}_{i:n_0} W^{0}_{j:n_0}} \\
- \frac{\sum_{i=1}^{n_1}  \sum_{i\neq j}^{n_1} W^{1}_{i:n_1} W^{1}_{j:n_1}  ||X_{1,(i:n_1)}-X_{1,(j:n_1)}||^{\alpha}}{\sum_{i=1}^{n_1}  \sum_{j\neq i}^{n_1} W^{1}_{i:n_1} W^{1}_{j:n_1}} \nonumber
\end{align}
\end{gather}
\textbf{($U$-statistic $\alpha$-energy distance under right censoring)},

\begin{gather}\label{eqn:estadistico2}
\begin{align}
\tilde{\gamma}_K^{2}(P_0,P_1)=  \frac{\sum_{i=1}^{n_0}  \sum_{j\neq i}^{n_0} W^{0}_{i:n_0} W^{0}_{j:n_0} K(X_{0,(i:n_0)},X_{0,(j:n_0)})}{ \sum_{j=1}^{n_0} \sum_{j\neq i}^{n_0} W^{0}_{i:n_0} W^{0}_{j:n_0}}+ \frac{\sum_{i=1}^{n_1}  \sum_{j\neq i}^{n_1} W^{1}_{i:n_1} W^{1}_{j:n_1}  K(X_{1,(i:n_1)},X_{1,(j:n_1)})}{\sum_{i=1}^{n_1}  \sum_{j\neq i }^{n_1}W^{1}_{i:n_1} W^{1}_{j:n_1}}\\-2\frac{\sum_{i=1}^{n_0}  \sum_{j=1}^{n_1} W^{0}_{i:n_0} W^{1}_{i:n_1} K(X_{0,(i:n_0)},X_{1,(j:n_1)})}{\sum_{i=1}^{n_0}  \sum_{j=1}^{n_1} W^{0}_{i:n_0} W^{1}_{i:n_1}} \nonumber
\end{align}
\end{gather}

\textbf{($U$-statistic kernel method under right censoring)}.

Analogously, we can define V-statistics in the following manner:

\begin{gather}\label{eqn:estadistico3}
\begin{align}
\tilde{\epsilon}_{\alpha}(P_0,P_1)= 2  \frac{\sum_{i=1}^{n_0}  \sum_{j=1}^{n_1} W^{0}_{i:n_0} W^{1}_{j:n_1} ||X_{0,(i:n_0)}-X_{1,(j:n_1)}||^{\alpha}}{\sum_{i=1}^{n_0}  \sum_{j=1}^{n_1} W^{0}_{i:n_0} W^{1}_{j:n_1}} - \frac{\sum_{i=1}^{n_0} \sum_{j=1 }^{n_0}  W^{0}_{i:n_0} W^{0}_{j:n_0}  ||X_{0,(i:n_0)}-X_{0,(j:n_0)}||^{\alpha}}{\sum_{i=1}^{n_0} \sum_{j= 1}^{n_0} W^{0}_{i:n_0} W^{0}_{j:n_0}} \\
- \frac{\sum_{i=1}^{n_1}  \sum_{j=1}^{n_1} W^{1}_{i:n_1} W^{1}_{j:n_1}  ||X_{1,(i:n_1)}-X_{1,(j:n_1)}||^{\alpha}}{\sum_{i=1}^{n_1}  \sum_{j=1}^{n_1} W^{1}_{i:n_1} W^{1}_{j:n_1}} \nonumber
\end{align}
\end{gather}

\textbf{($V$-statistic $\alpha$-energy distance under right censoring)},

\begin{gather}\label{eqn:estadistico4}
\begin{align}
\tilde{\gamma}_K^{2}(P_0,P_1)=  \frac{\sum_{i=1}^{n_0}  \sum_{j=1}^{n_0} W^{0}_{i:n_0} W^{0}_{j:n_0} K(X_{0,(i:n_0)},X_{0,(j:n_0)})}{ \sum_{j=1}^{n_0} \sum_{j=1}^{n_0} W^{0}_{i:n_0} W^{0}_{j:n_0}}+ \frac{\sum_{i=1}^{n_1}  \sum_{j=1}^{n_1} W^{1}_{i:n_1} W^{1}_{j:n_1}  K(X_{1,(i:n_1)},X_{1,(j:n_1)})}{\sum_{i=1}^{n_1}  \sum_{j=1 }^{n_1}W^{1}_{i:n_1} W^{1}_{j:n_1}}\\-2\frac{\sum_{i=1}^{n_0}  \sum_{j=1}^{n_1} W^{0}_{i:n_0} W^{1}_{i:n_1} K(X_{0,(i:n_0)},X_{1,(j:n_1)})}{\sum_{i=1}^{n_0}  \sum_{j=1}^{n_1} W^{0}_{i:n_0} W^{1}_{i:n_1}} \nonumber
\end{align}
\end{gather}

\textbf{($V$-statistic kernel method under right censoring)}.

Finally, to establish the consistency more easily, our final statistics are given as

\begin{equation}
T_{\tilde{\epsilon}_\alpha}= \frac{n_0n_1}{n_0+n_1} \tilde{\epsilon}_{\alpha}(P_0,P_1) \text{   and   }    T_{\tilde{\gamma}_K^{2}}= \frac{n_0n_1}{n_0+n_1} \tilde{\gamma}_K^{2}(P_0,P_1). 
\end{equation}


In Appendix \ref{appendix:C}, we can find,  in some instances, an interpretation of the limits of these statistics. In particular, we show that the statistics behave as distances between  distribution functions and the  characteristic functions  in a weighted Hilbert space $L^{2}(I)$.

\subsection{Permutation tests}

As in the case of  the usual energy  two-sample test from  \cite{szekely2003statistics}  and \cite{szekely2013energy},
 the null distribution of  our proposed statistics is approximated with a  permutation method. If the censorship mechanism  of the two groups is the same,  the standard permutation method    from \cite{neuhaus1993conditional} and  \cite{wang2010testing} is valid. However, when the censoring distributions differ, the standard permutation method does not perform well in small-sample settings or when the amount of censoring is large, see \cite{heimann1998permutational}. In this case, one alternative is to use the re-sampling strategy proposed in  \cite{wang2010testing}. Below we describe the steps of the classical permutation procedure.


We denote by $Z=(\overbrace{0,\cdots,0}^{n_0},\overbrace{1,\cdots,1}^{n_1})$ a vector of size $n=n_0+n_1$ that  indicates  the observed  group  membership. Thus,  $z_i = 1$ ($z_i = 0$) indicates   that the $i$-th  subject  belongs to group $1$  ($0$).  We then order the observed  times and  censorship indicators, thus we  construct   vectors
 $U= (X_{0,1},\cdots, X_{0,n_0},X_{1,1},\cdots ,X_{1,n_1})$ and $\delta=(\delta_{0,1},\cdots ,\delta_{0,n_0},\delta_{1,1},\cdots ,\delta_{1,n_1})$. Next,  if we are interested in calculating the  distribution of the statistic  $\theta (Z, U,\delta)$ under the null distribution ($P_0 = P_1$), then we can proceed to construct permutations of the  data. Specifically, let $\mathcal{S}$ be  a collection  of sets of size  $n_0$  whose elements  belong  to $\{1,\ldots, n_0 + n_1\}$. For  every  $I \in \mathcal{S}$, we  construct  a vector $Z^{I} \in \mathbb{R}^n$
 satisfying $Z_i^I = 0 $  if  $i \in I$ and  $Z_i^I  =  1$  if  $i \notin I$. Next, we compare $\theta(Z,U,\delta)$  against $\theta(Z^I,U,\delta)$  for all $I \in \mathcal{S}$. The p-value is calculated  as
\begin{equation}
\label{eqn:pvalue}
\text{p-value}=\frac{\sum_{ I \in  \mathcal{S} }1\{{\theta(Z^I,U,\delta)\geq \theta(Z,U,\delta)\}}}{\binom{n}{n_0}}.
\end{equation}

In practice, we can reduce  the number of operations in (\ref{eqn:pvalue}) by using a random  subset $\mathcal{S}^{\prime}$  of $\mathcal{S}$ to  obtain
\begin{equation*}
\text{p-value}\approx \frac{\sum_{ I \in  \mathcal{S}^{\prime} }1\{{\theta(Z^I,U,\delta)\geq \theta(Z,U,\delta)\}}}{  \vert \mathcal{S}^{\prime}\vert }.
\end{equation*}

\subsection{Selection of tuning parameters/distances}\label{heuristic}

Although the proposed methods are consistent against all alternatives from an asymptotic point of view, (see  Theorem \ref{thm1}), one of the main practical difficulties with finite samples is the selection of parameters/distances so that high statistical power is guaranteed. In fact, this problem is very common in kernel methods both in prediction models and hypothesis testing.  \cite{filippi2016bayesian} state that there exist few theoretical approaches to tackle this problem.

In this work, we  only use the energy distance with the Euclidean distance and the Gaussian and Laplacian kernels (see Table \ref{tab:table1}). The main reason for this is that there is a corpus of previous work on how the selection of parameters influences  the performance of different methods. There are also  some heuristics that include  theoretical results, see  \cite{ramdas2015decreasing} and \cite{garreau2017large}.



Despite  the  fact that  energy  distance is more  sensitive to the choice of the $\alpha$ than rather that to that of kernel (see for example \cite{sejdinovic2013equivalence}), there is no known formal criterion for selecting an optimal value of $\alpha$.


In regard to the Gaussian and Laplacian kernels, there is a known add-hoc rule called Median heuristic that consists in selecting the median between the distance pairs of the aggregate sample. This procedure is explained in detail below.

Let $X= (X_1,\dots,X_{n_0},X_{n_0+1},\cdots,X_{n_0+n_1})= (X_{0,1},\cdots,X_{0,n_0}, X_{1,n_1} ,\cdots, X_{1,n_1})$ be the aggregate sample vector. Consider $D  \in \mathbb{R}^{  (n_0+n_1) \times  (n_0+n_1) }$ defined as $D_{ij}=  |X_i-X_j| (i=1,\dots ,(n_0+n_1),j=1,\dots ,(n_0+n_1))$.

As in \cite{garreau2017large},  we define

$$\sigma= \sqrt{H_{n}/2} \hspace{0.2cm},\,\,\,\,\, \text{where} \,\,\,\,\, \hspace{0.2cm} H_{n}= median\{D_{ij}^{2}:1\leq i<j\leq (n_0+n_1)\}. $$

In the literature, the resulting $\sigma$  is known  as kernel bandwidth.  An intuitive explanation of how this works is given below:

\begin{itemize}
	\item Given $X_i,X_j$ ($i=1,\dots ,(n_0+n_1),j=1,\dots ,(n_0+n_1)$), if $\sigma\to 0$ or $\sigma\to \infty$, then $K(X_i,X_j)\to1$ or $K(X_i,X_j)\to 0$  (see Table \ref{tab:tabla1}).  Therefore,  $\tilde{\gamma}_K^{2}(P_0,P_1)$ is almost always constant (see equation \ref{eqn:estadistico2}), and  the statistical power  of the test is  low.   
	\item It is reasonable to impose that the median of  $D_{ij}$ ($i=1,\dots ,(n_0+n_1),j=1,\dots ,(n_0+n_1)$) and $\sigma$  are of the same order so that $K(X_i,X_j)$ ($i=1,\dots ,(n_0+n_1),j=1,\dots ,(n_0+n_1)$) does not take   unnecessarily small or large values, so as not to suffer from the limitations mentioned above. 
	\item Hence, a reasonable choice for $\sigma$  is “in the middle range” of  $D_{ij}$ ($i=1,\dots ,(n_0+n_1),j=1,\dots ,(n_0+n_1)$). In this way, $\sigma$ is of the same order as median of $D_{ij}$ ($i=1,\dots ,(n_0+n_1),j=1,\dots ,(n_0+n_1)$). The global dispersion between terms $K(X_i,X_j)$ ($i=1,\dots ,(n_0+n_1),j=1,\dots ,(n_0+n_1)$) is maximized, and therefore, the test has greater discrimination capacity.

\end{itemize}

Alternatively, $\sigma$ is sometimes set to $ \sqrt{H_n}$.
The influence of the suboptimal specification of the kernel bandwidth has mainly been  studied  in situations of high dimensionality. In this context, it has been shown to lead to important differences in  power of  tests. For instance, \cite{ramdas2015decreasing}  noticed, using a simulation study and theoretical analysis,  that the median heuristic $\sigma$   maximized power with Gaussian kernel in several cases. However, power can be suboptimal with the Laplacian kernel, showing better results with some values of $\sigma=H_n^{\alpha}$ for $\alpha\in (0,2]$ with $\alpha\neq 1/2$. In any case, we should be cautious interpreting these results.  As we do not consider the multidimensional case, the effects of a suboptimal kernel bandwidth specification may not be so dramatic in our setting.

In the case of censorship, in addition to the vector  $X$, we also have to consider the  vector $\delta= (\delta_{0,1},\cdots,\delta_{0,n_0}, \delta_{1,n_1} ,\cdots, \delta_{1,n_1})$ with censorship indicators. Now, we define the set of indices $I=\{i\in\{1,2,\dots,(n_0+n_1)\}: \delta_i= 1 \}$. A reasonable estimator for $\sigma$  is given by  $\sigma= \sqrt{H_n^{*}}$ or $\sigma= \sqrt{H_n^{*}/2}$ where 
 
 $$H^{*}_{n}= median\{D_{ij}^{2}:1\leq i<j\leq (n_0+n_1) \hspace{0.2cm} \text{with} \hspace{0.2cm} i,j\in I\}.$$

 The previous definition is justified because   in equations  (\ref{eqn:estadistico1})--(\ref{eqn:estadistico4}), only the elements whose indices belong to $I$ influence  the corresponding expressions.

\section{Theory}
\label{sec:theory}

Next, we show that,  under very mild conditions, our proposed  tests are consistent  against  all alternatives. This is formally stated  below and  the proof  can be found   in Appendix \ref{appendix:proof}.

\begin{theorem}
	\label{thm1}
	Let  $X_{j,i}= min(T_{j,i},C_{j,i})\sim^{ \text{i.i.d.}  } P_{c(j)} $ and  $\delta_{j,i}= 1\{X_{j,i}= T_{j,i}\}$	$(j=0,1; i=1,\dots,n_j)$ with $P_{c(j)}$ $(j=0,1)$. Suppose also that  the conditions  stated in Section \ref{sec:summary} hold  for the random variables $T_{j,i}\sim^{ \text{i.i.d.}  }  P_{j},\,\,\,\,\,C_{j,i} \sim^{ \text{i.i.d.}  }  Q_{j}$ $(j=0,1; i=1,\dots,n_j)$. Further assume that $\tau_0=\tau_1$  or the  support of the distribution functions $P_0$ and $P_1$ is  contained in the intervals $ [0,\tau_0] $ and $[0,\tau_1]$, respectively. Then, 
for testing the null  $H_0  \,:\, P_{0}(t)  =  P_{1}(t) \,\,\,\,\,\forall  t\in [0,\tau_1] $	
	 the statistics $T_{\tilde{\epsilon}_\alpha}$ and   $T_{\tilde{\gamma}_K^{2}}$  determine tests that are consistent against all fixed  alternatives with continuous random variables.
\end{theorem}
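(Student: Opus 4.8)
The plan is to establish consistency in two movements: first, that each unscaled statistic converges almost surely to a genuine population-level discrepancy between the \emph{conditional} laws $P_0'$ and $P_1'$, and that this discrepancy is strictly positive under the alternative; second, that the scaling $\tfrac{n_0n_1}{n_0+n_1}$ forces the observed statistic to diverge while the permutation critical value stays bounded in probability, so that the rejection probability tends to one.

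For the first movement I would invoke the strong law of large numbers for Kaplan--Meier integrals and for $U$- and $V$-statistics under right censoring (the results of \cite{stute1993multi,stute1995statistical} and \cite{bose1999strong}). Since the weights $W^l_{i:n_l}$ are exactly the Kaplan--Meier jump sizes, every double sum in (\ref{eqn:estadistico1})--(\ref{eqn:estadistico4}) is a Kaplan--Meier $U$- or $V$-statistic whose kernel is either $\|x-y\|^\alpha$ or $K(x,y)$. Under the moment conditions $E(T_{j,i}^2)<\infty$ and $E(C_{j,i}^2)<\infty$ these kernels are integrable against the limiting subdistributions, and the self-normalization by $\sum W^l_iW^l_j$ cancels the total Kaplan--Meier mass, so that
\[
\tilde{\epsilon}_\alpha(P_0,P_1)\xrightarrow{\text{a.s.}}\epsilon_\alpha(P_0',P_1'),\qquad \tilde{\gamma}_K^2(P_0,P_1)\xrightarrow{\text{a.s.}}\gamma_K^2(P_0',P_1'),
\]
the energy distance and MMD between the truncated-and-renormalized laws. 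Passing to $P_0',P_1'$ is precisely what repairs the pathologies noted earlier for the un-normalized statistic, so that these limits are bona fide distances.

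Next I would use the defining property of the energy distance for $\alpha\in(0,2)$ (Section \ref{sec:background}) and of the MMD for a characteristic kernel, namely $\epsilon_\alpha(P_0',P_1')=0\iff P_0'=P_1'$ and $\gamma_K^2(P_0',P_1')=0\iff P_0'=P_1'$. Here the hypothesis $\tau_0=\tau_1$ (or that the supports lie in $[0,\tau_0]$, $[0,\tau_1]$) is essential: it guarantees that renormalizing to $[0,\tau_l]$ discards no mass on the region where the null is tested, whence $P_0'=P_1'\iff P_0(t)=P_1(t)\ \forall t\in[0,\tau_1]$. Consequently, under any fixed continuous alternative the limit is a strictly positive constant $c$, so that
\[
T_{\tilde{\epsilon}_\alpha}=\frac{n_0n_1}{n_0+n_1}\,\tilde{\epsilon}_\alpha(P_0,P_1)\xrightarrow{\text{a.s.}}+\infty,
\]
and likewise for $T_{\tilde{\gamma}_K^2}$.

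Finally, for the permutation test I would show that the $(1-\alpha)$-quantile of the permutation null distribution is $O_P(1)$. Re-labelling the pooled observations $(X_i,\delta_i)$ makes the two permuted groups behave as subsamples of a single pooled population with a common conditional law $\bar P'$; by the same Kaplan--Meier SLLN the permuted unscaled statistic converges to $\epsilon_\alpha(\bar P',\bar P')=0$, so the permuted scaled statistic remains bounded in probability and its upper quantile does not diverge. Combining this with the divergence of the observed statistic yields rejection probability tending to one. I expect the main obstacle to be exactly this last step: controlling the permutation quantile rigorously, because the Kaplan--Meier weights are themselves functions of the permuted labels (so the permuted summands form neither an i.i.d.\ nor a cleanly exchangeable array), and when $Q_0\neq Q_1$ the permutation does not exactly preserve the joint law of $(X,\delta)$, so one must argue that it nonetheless keeps the quantile bounded. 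A secondary technical point is verifying the Kaplan--Meier SLLN for the \emph{unbounded} kernel $\|x-y\|^\alpha$ under only second-moment assumptions, which requires the integrability bookkeeping of \cite{stute1995statistical}.
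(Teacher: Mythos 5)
Your first two movements coincide with the paper's own proof: almost sure convergence of the self-normalized statistics to the energy distance/MMD between the truncated-and-renormalized laws $P_0'$, $P_1'$ via the Kaplan--Meier strong laws of \cite{stute1993multi} and \cite{bose1999strong} (the paper's Theorem \ref{consistencia}), and strict positivity of that limit under a fixed alternative by transferring the negative-definiteness/characteristic-kernel argument of \cite{szekely2005new} to the conditional laws (the paper's Theorem \ref{positivo}), with the hypothesis $\tau_0=\tau_1$ or support containment playing exactly the role you assign it. The divergence of $T_{\tilde{\epsilon}_\alpha}$ and $T_{\tilde{\gamma}_K^{2}}$ under the alternative is also the paper's step.

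The genuine gap is in your last step. From the Kaplan--Meier SLLN you conclude that the permuted \emph{unscaled} statistic converges to $\epsilon_\alpha(\bar P',\bar P')=0$ and then assert that the permuted \emph{scaled} statistic ``remains bounded in probability.'' That inference is a non sequitur: $\frac{n_0n_1}{n_0+n_1}\cdot o_{\mathrm{a.s.}}(1)$ can diverge at any rate, and the SLLN supplies no rate. What is needed is that under the null (or under label exchange) the unscaled statistic is $O_P\bigl((n_0+n_1)^{-1}\bigr)$, and this is precisely what the paper obtains by a different mechanism: it re-centers the kernel by its mean embedding, $K^{*}(x,y)=\langle K(x,\cdot)-\mu_{P_0'},\,K(y,\cdot)-\mu_{P_0'}\rangle$, observes that $K^{*}$ is degenerate, and invokes limit theorems for degenerate Kaplan--Meier $U$-statistics (Corollary 2.9 of \cite{fernndez2018kaplanmeier}, together with \cite{korolyuk2013theory}) so that the $\frac{n_0n_1}{n_0+n_1}$-scaled statistic converges in distribution under $H_0$; fixed asymptotic critical values $c_{\alpha_1},c_{\alpha_2}$ then stay bounded while the statistic diverges under the alternative. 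Note also that the paper never controls the permutation quantile at all --- its consistency argument is effectively for the asymptotically calibrated test --- and it explicitly concedes that the degenerate two-sample cross term under censoring has no available limit theory, deferring that to future work. So the obstacles you flag (Kaplan--Meier weights depending on the permuted labels, $Q_0\neq Q_1$) are real and unresolved both in your sketch and in the paper; but even granting them away, your boundedness claim requires the degenerate $U$-statistic rate, not the SLLN, and without it the proof does not close.
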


	The supposition that $\tau_0=\tau_1$ can be modified by truncating the random variables in $[0,\tau]$, with $\tau= \min\{\tau_0,\tau_1\}$. This way,  it is guaranteed that the  $T_{\tilde{\epsilon}_\alpha}$ and   $T_{\tilde{\gamma}_K^{2}}$ statistics define consistent  tests against all the alternatives.  \cite{schumacher1984two} followed the same approach with the Kolmogorov-Smirnov and Cramer Von-Mises tests under censorship. Note that one of the two hypotheses must be verified because  we can have two distribution functions that take the same values in $[0,\tau]$. However they can differ in $(\tau, \infty)$ and the statistics  in the limit take the same value by the  normalization' that we use in the statistics (\ref{eqn:estadistico1})--(\ref{eqn:estadistico4}). Naturally, if the end right of supports of the distribution functions $P_{0}$ and $P_{1}$ are different, and  are contained in $[0,\tau_0]$ and $[0,\tau_1]$ respectively,   the test will show differences even if $P_0$ and $P_1$ take the same values in $[0,\tau]$.
		
	

The Kolmogorov-Smirnov and Cramer Von-Mises tests under censorship have been proposed in the context of absolutely continuous random variables \cite{schumacher1984two}. Unlike those tests, our results are also valid for discrete distributions provided that the second-order moments of the random variables exist. This can be very important in practice, since many of the lifetimes collected in databases for simplification are truncated and discrete in nature (see for example \cite{cai2019lce} and  \url{http://lce.biohpc.swmed.edu/lungcancer/dataset.php}).

	


\section{Simulation study}
\label{sec:simulation}

The simulation study is divided in two phases. In the first part, we consider scenarios where the null hypothesis is true. Then, the performance of the proposed tests is compared with the log-rank family tests with different censorship rates and different sample size.  In particular, the tests used are the energy distance ($\alpha=1$), Gaussian kernel $(\sigma=1$), Laplacian kernel $(\sigma=1$),
log-rank (\cite{mantel1959statistical}), Gehan generalized Wilcoxon test (\cite{gehan1965generalized}), Tarone-Ware (\cite{tarone1977distribution}), Peto-Peto (\cite{peto1972asymptotically}), Fleming $\&$ Harrington (\cite{fleming1981class}) (with $\rho=\gamma=1$). For this purpose, parametric distributions such as normal exponential or lognormal are used.  


In the second phase, the same tests are compared in scenarios where the null hypothesis is not true.   As in \cite{guyot2012enhanced}, we use the Digitizeit software (\url{https://www.digitizeit.de/}) to extract several survival curves  from different clinical trials in which there was a delay effect, or  there was no clear violation of the hypothesis that the hazard functions were not maintained. Survival curves were extracted from the studies analyzed in the following two papers:  \cite{su2018time} and \cite{alexander2018hazards}. We also consider simulations under the hypothesis that the hazard functions are proportional. This is  to assess the power loss of our tests compared to the log-rank tests. In all comparisons, the $\sigma$ parameter of the Gaussian and Laplacian kernels (see Table \ref{tab:tabla1}) is selected with the methodology defined in the Section  \ref{heuristic}. 

 When the null hypothesis is true, the sample size $n\in \{20,50\}$. Otherwise, $n\in \{20,50,100,200\}$. The censorship mechanism was the same within each simulation performed.

All the tests  are executed with the statistical software \textbf{\textsf{R}}. For the family of the log-rank test the coin package \cite{hothorn2008implementing} is  used while the new tests were implemented in \textbf{\textsf{C ++}} and integrated  in \textbf{\textsf{R}} with the ``Rcpp''  \cite{eddelbuettel2011rcpp} and ``Rcpp Armadillo''  libraries. In all  cases, the tests were calibrated by the permutation method, with $1000$ permutations executed.

\subsection{Null hyphotesis}
\label{sec:null}

We perform $500$ Monte Carlo simulations in which the null hypothesis is correct. The censoring rates are $10$ and $30$ percent and the sample size of $20$ and $50$ individuals. Since  p-values  are distributed uniformly ($\text{Uniform}(0,1))$ under the null hypothesis, the mean of the observed p-values obtained should be close to $0.5$, and the standard deviation close to $\sqrt{1/12}=0.2886751$. Similarly, approximately $5$ percent of the observations should have a value less than $0.05$. In the Appendix \ref{appendix:E} Tables (\ref{append:tab:tabla3})--(\ref{append:tab:tabla5}),  we can see the results of calculations of the mean and standard deviation for each test. In  Tables (\ref{append:tab:tabla6})--(\ref{append:tab:tabla8}),  the proportion of p-values is shown to be approximately  less  than or equal to $0.05$ for the same cases.

The results of the proposed tests under the null hypothesis are consistent and similar to those of the log-rank test family.
 Certain discrepancies with the theoretical values  are acceptable when doing the comparison with $500$ Monte Carlo simulations in $8$ different tests. In turn, the Kaplan-Meier estimator used in our models as well as in some of the log-rank family models presents a certain bias that is dependent on the censoring ratio, which produces small deviations under what is expected in a theoretical framework under the null hypothesis.

\subsection{Alternative hypothesis}

We perform $500$ Monte Carlo simulations in different situations where  the null hypothesis does not hold. We  differentiate two cases: $i)$ simulated data from survival curves extracted from clinical trials 
by means of the Digitizelt  and $ii)$ simulated data from an exponential distribution where the hypothesis that hazard ratio functions are proportional is true. The value $\alpha = 0.05$ is used as the cut-off for significance.

\subsubsection{Survival curves from clinical trials}

The curves extracted in this article for comparison are as follows:
Figure $1$-$A$ from \cite{borghaei2015nivolumab}, Figure $2$-$A$  from \cite{rodriguez2016phase}, Figure $2$-$B$ \cite{motzer2015nivolumab}, Figure $1$-$B$ from \cite{ferris2016nivolumab}, Figure $1$-$B$ from \cite{bellmunt2017pembrolizumab}, and  Figure $1$-$C$ in \cite{borghaei2015nivolumab}.

These articles were compiled from  \cite{su2018time} and \cite{alexander2018hazards} who assessed  the limitations of log-rank in many clinical situations or the problem of using summary measures to describe a survival curve. In addition, \cite{alexander2018hazards} focused on the field of immunotherapy where there was often a long-term delay effect on survival, which  motivated the recent development of new tests for this situation, e.g.,  \cite{xu2017designing} and \cite{xu2018designing}.

We use the presence of a clear delay effect in one of the treatments with respect to the other as criteria for selecting the survival curves. Additionally, a curve was selected in which hypothesis that the function is hazard are proportional is not violated with experimental data, Figure $1$-$A$ in \cite{rodriguez2016phase}. In most of the selected curves, the tests used in the original papers did not show statistically significant differences. 

The process of reconstructing each pair of curves is as follows:

\begin{enumerate}
	\item Extraction of the numerical values of the curves through the software Digitizelt.
	\item Reconstruction of  the curves from the numerical values  in the statistical software \textbf{\textsf{R}}.
	\item Truncation of the support of the curves to minimum right end of both curves, that is $\tau= \min\{\tau_0,\tau_1\}$, where $\tau_0$ is the right end of the first curve, and, analogously, $\tau_1$ for the second curve.    
	\item Smoothing curves with  cubic smoothing spline, as in \cite{Hast:Tibs:1990}. 
	\item Applying piecewise anti-isotonic linear regression so that the generated curves decrease, see \cite{BA03873521}. Subsequently,  data from the  estimated curves are simulated. The censorship variable is $C\sim^{} Uniform(0,\tau)$ where $\tau$ is the maximal value of support common to both curves by $3$.

\end{enumerate} 

	In Figures	\ref{fig:fig1}--\ref{fig:fig3}, 	
we can see the Kaplan-Meier curves after simulating data from the generated curves (with samples size of $10000$ individuals per population) along with an evaluation of power.

The results are discussed below:

\begin{itemize}
	\item In  Figures	\ref{fig:fig2} and	\ref{fig:fig3}, all the images reflect a situation of delay between the two treatments. In addition, almost all the patients die in the interval of time studied. In this situation, all our methods outperform  the log-rank family test studied, especially the tests based on the Laplacian and Gaussian kernels. 
	\item In  Figure \ref{fig:fig1} (left), there is a small delay much smoother than those discussed above. In addition, there is a significant fraction of patients who survive. In this situation, all the tests have  low power, even when the sample size is equal to $200$, but the Fleming $\&$ Harrington test works better than our proposals.   
	\item In Figure \ref{fig:fig1} (right), the situation where the hypothesis that hazard functions  do not seem violated, our tests have  low power. As expected, the  best method in this case is the log-rank, although it does not present  high power either. Graphically, it can be seen that the degree of discrepancy between both curves is low.
	
\end{itemize}

\begin{figure}[H]
	\centering
	\begin{gather}
	\begin{tabular}{cc}
		\includegraphics[width=8.5cm,height=9cm,keepaspectratio]{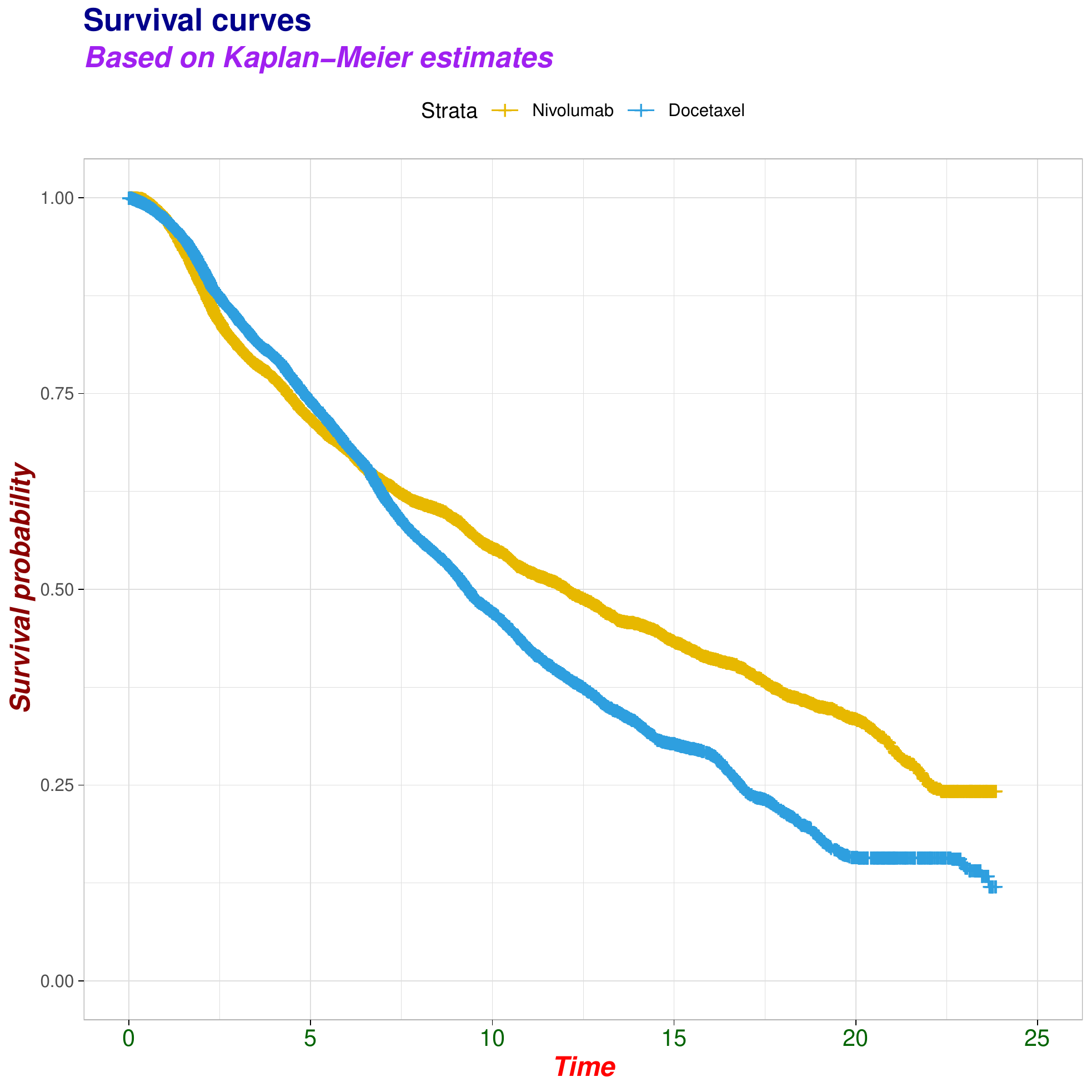}
		\includegraphics[width=8.5cm,height=9cm,keepaspectratio]{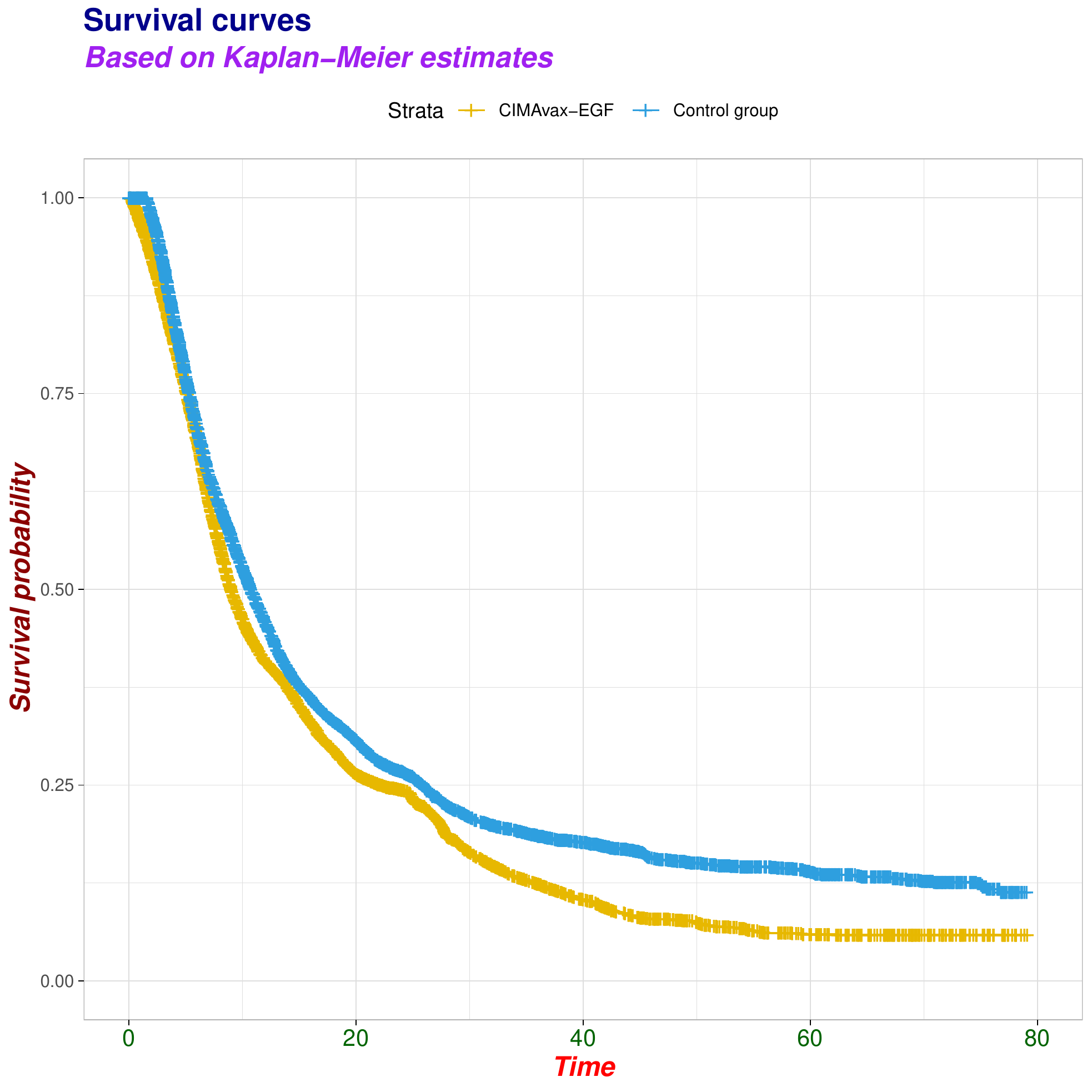}	
		\\
		\includegraphics[width=8.5cm,height=9cm,keepaspectratio]{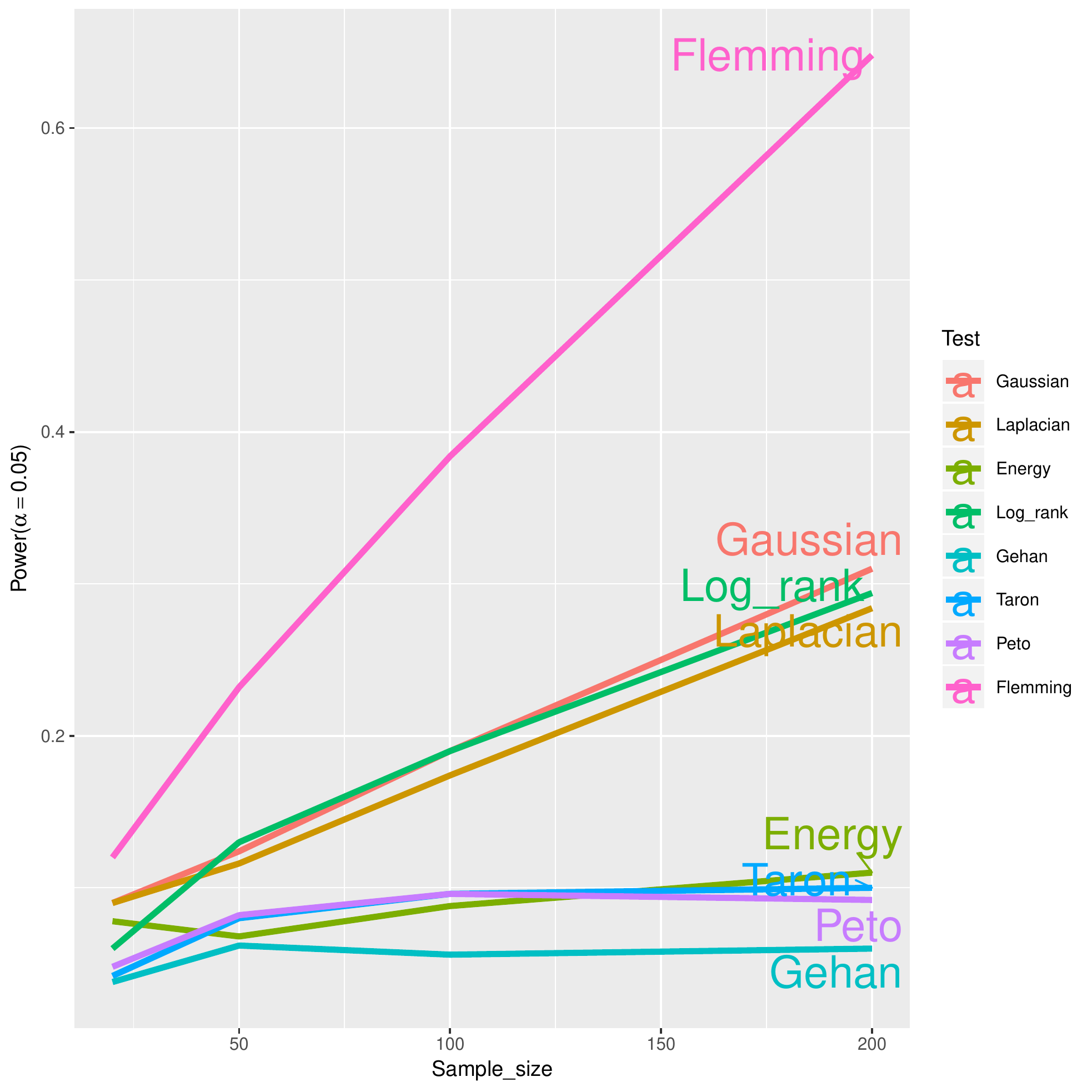}
		\includegraphics[width=8.5cm,height=9cm,keepaspectratio]{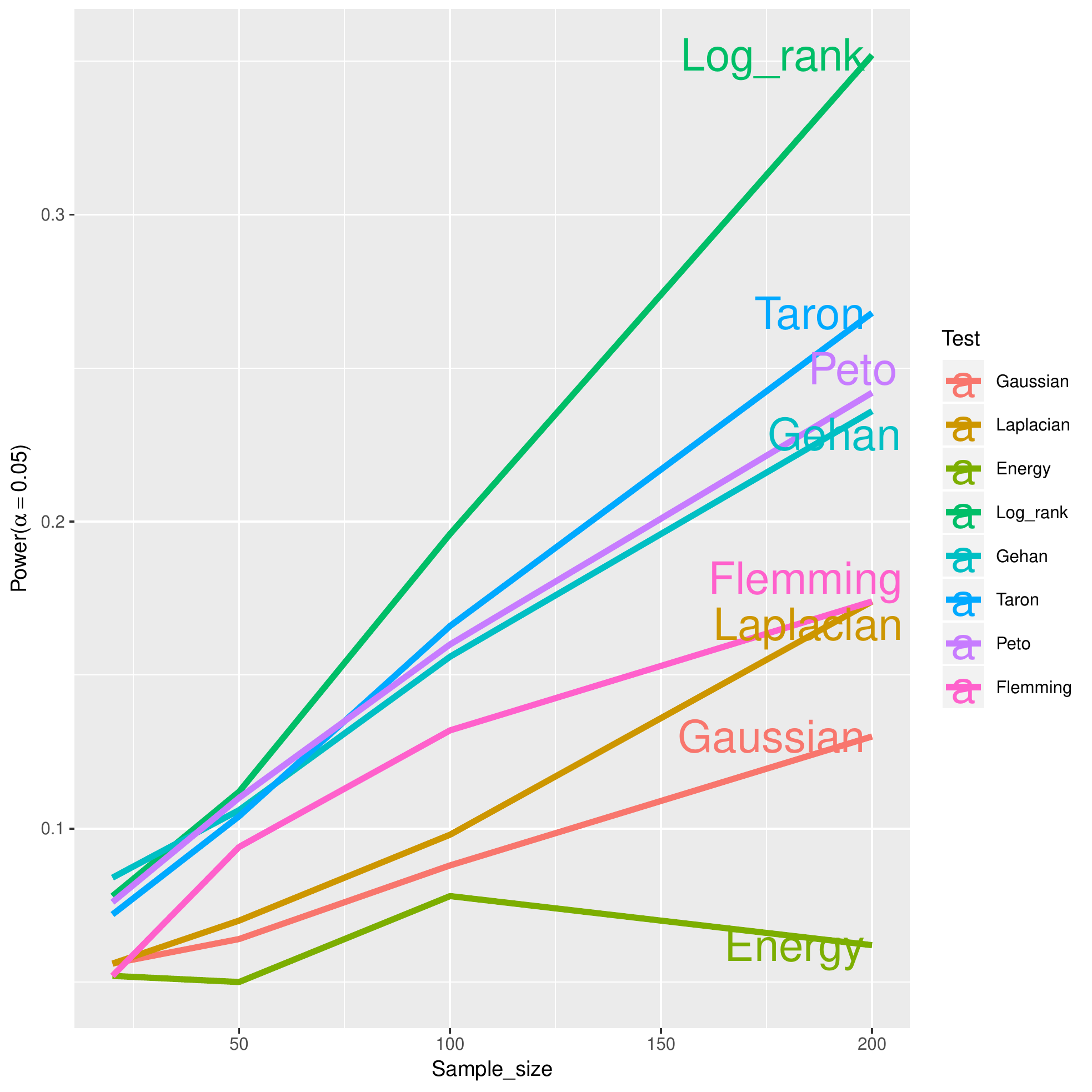}	
	\end{tabular}
	\end{gather}
	\caption{Statistical power of survival curves extracted \cite{borghaei2015nivolumab} Figure 1-$A$ (left) from and \cite{rodriguez2016phase} Figure 2-$A$ (right).} \label{fig:fig1}
\end{figure}

\begin{figure}[H]
	\centering
\begin{gather}
	\begin{tabular}{cc}
		\includegraphics[width=8.5cm,height=9cm,keepaspectratio]{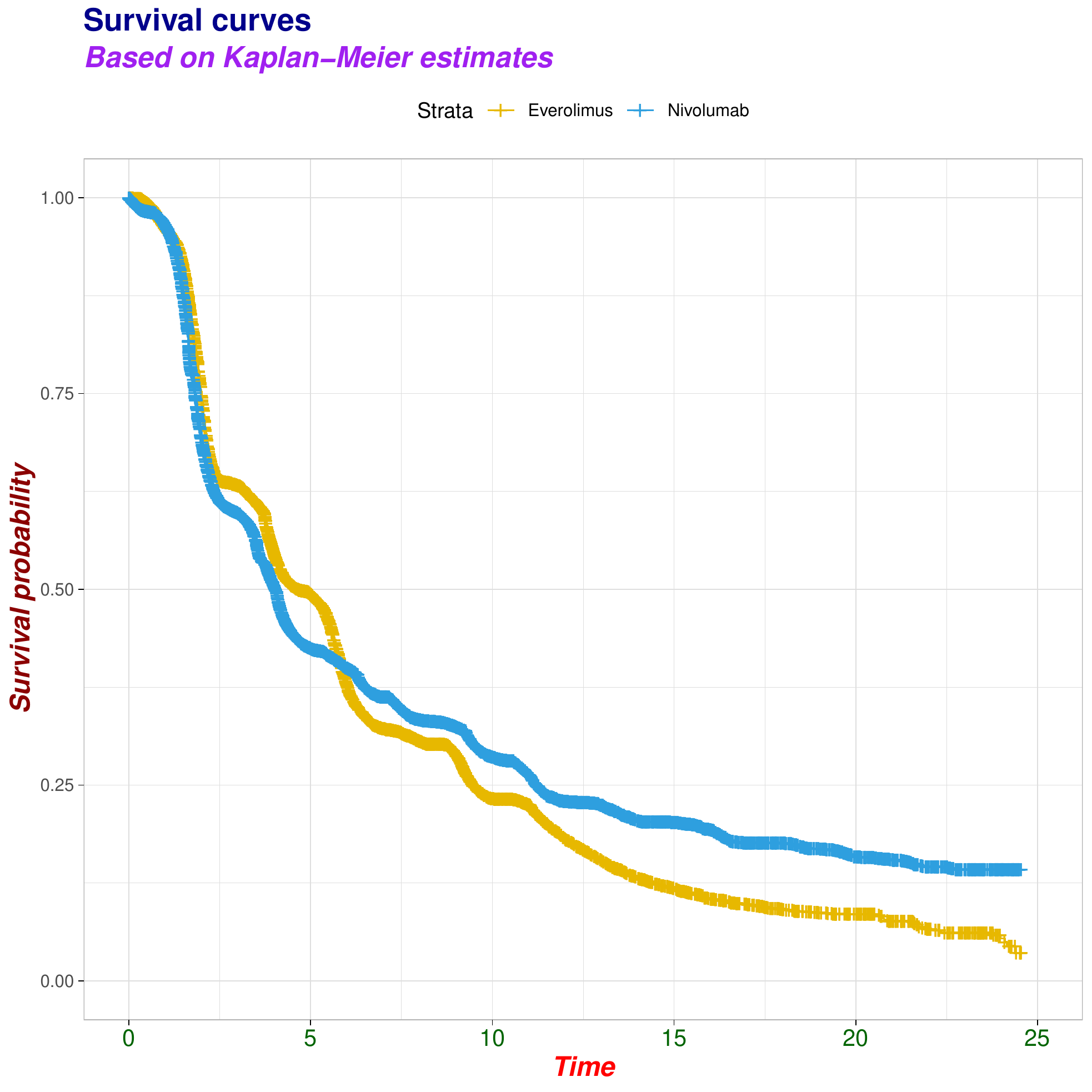}
		\includegraphics[width=8.5cm,height=9cm,keepaspectratio]{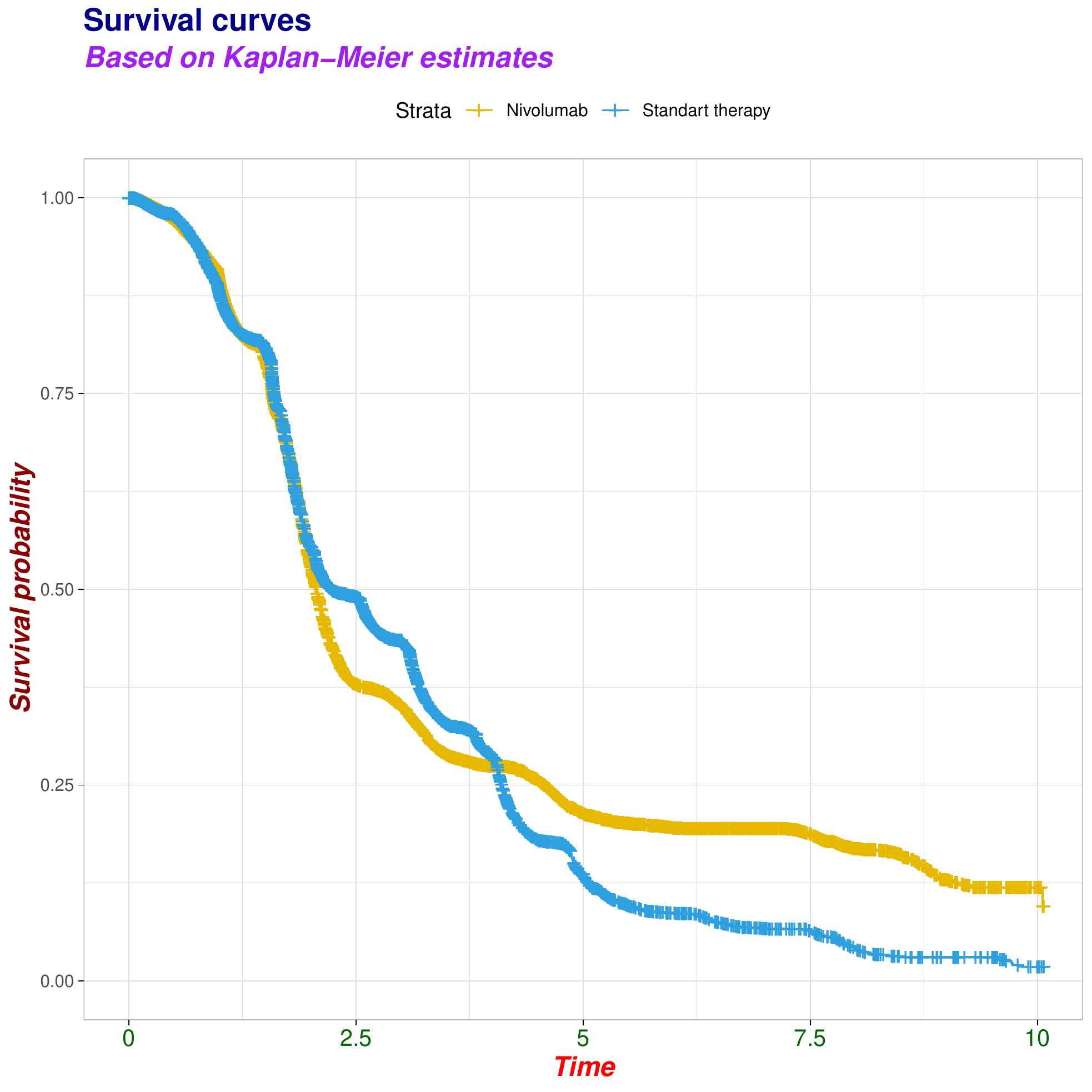}
		\\
		\includegraphics[width=8.5cm,height=9cm,keepaspectratio]{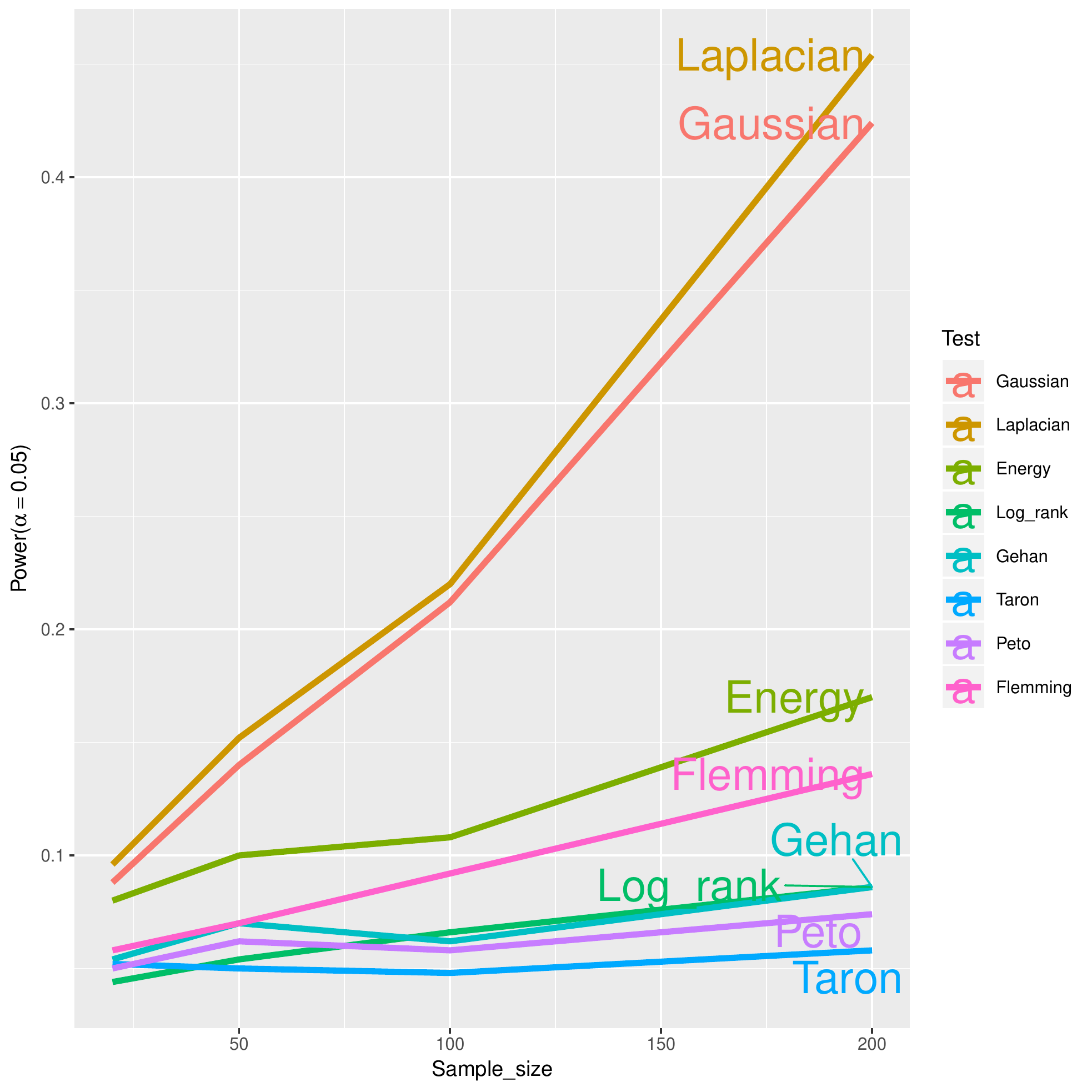}
		\includegraphics[width=8.5cm,height=9cm,keepaspectratio]{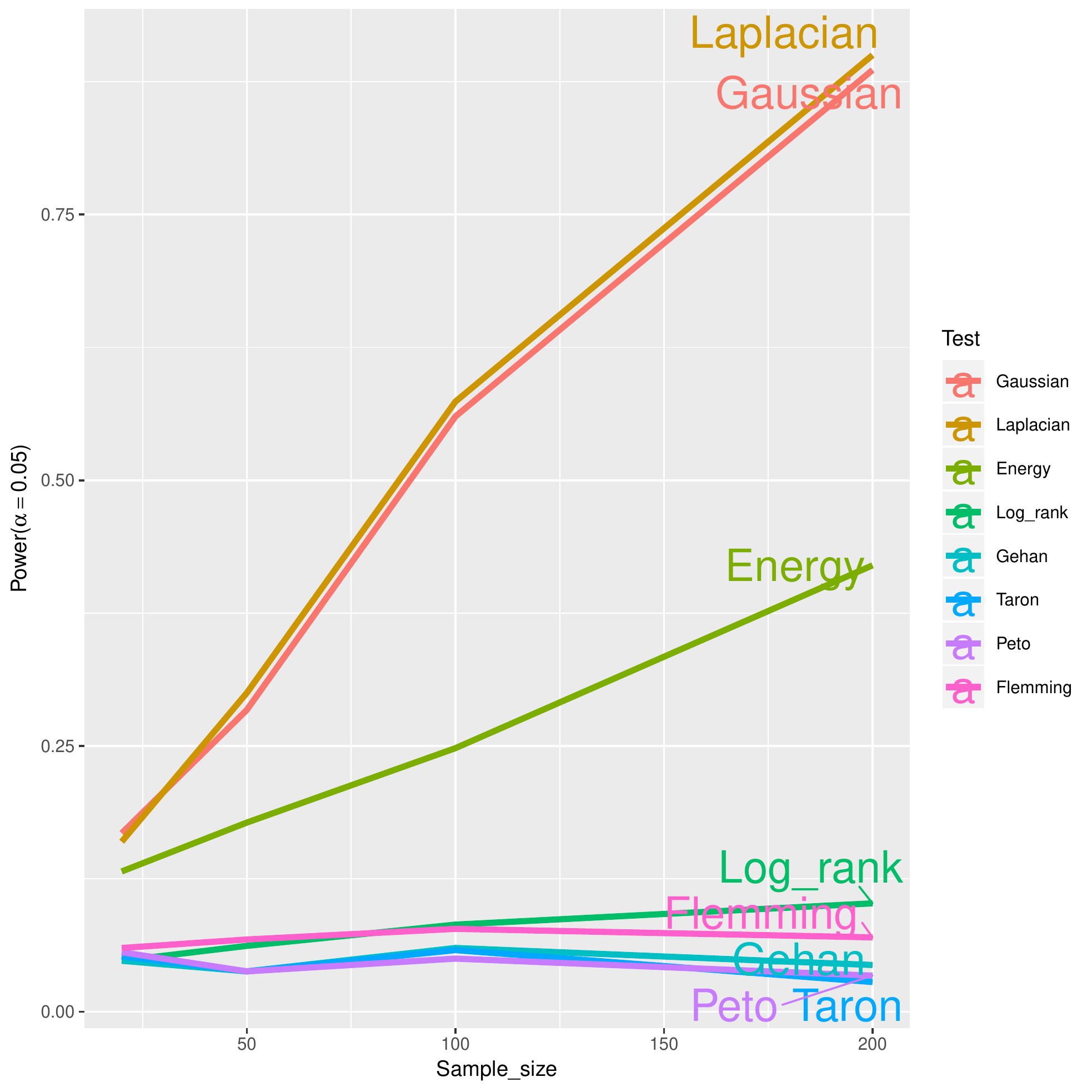}
	\end{tabular}
	\end{gather}
	\caption{Statistical power of survival curves extracted  \cite{motzer2015nivolumab} Figure $2$-B (left) and \cite{ferris2016nivolumab} Figure $1$-B (right).}\label{fig:fig2}
\end{figure}

\begin{figure}[H]
	\centering
    \begin{gather}
	\begin{tabular}{cc}
		\includegraphics[width=8.5cm,height=9cm,keepaspectratio]{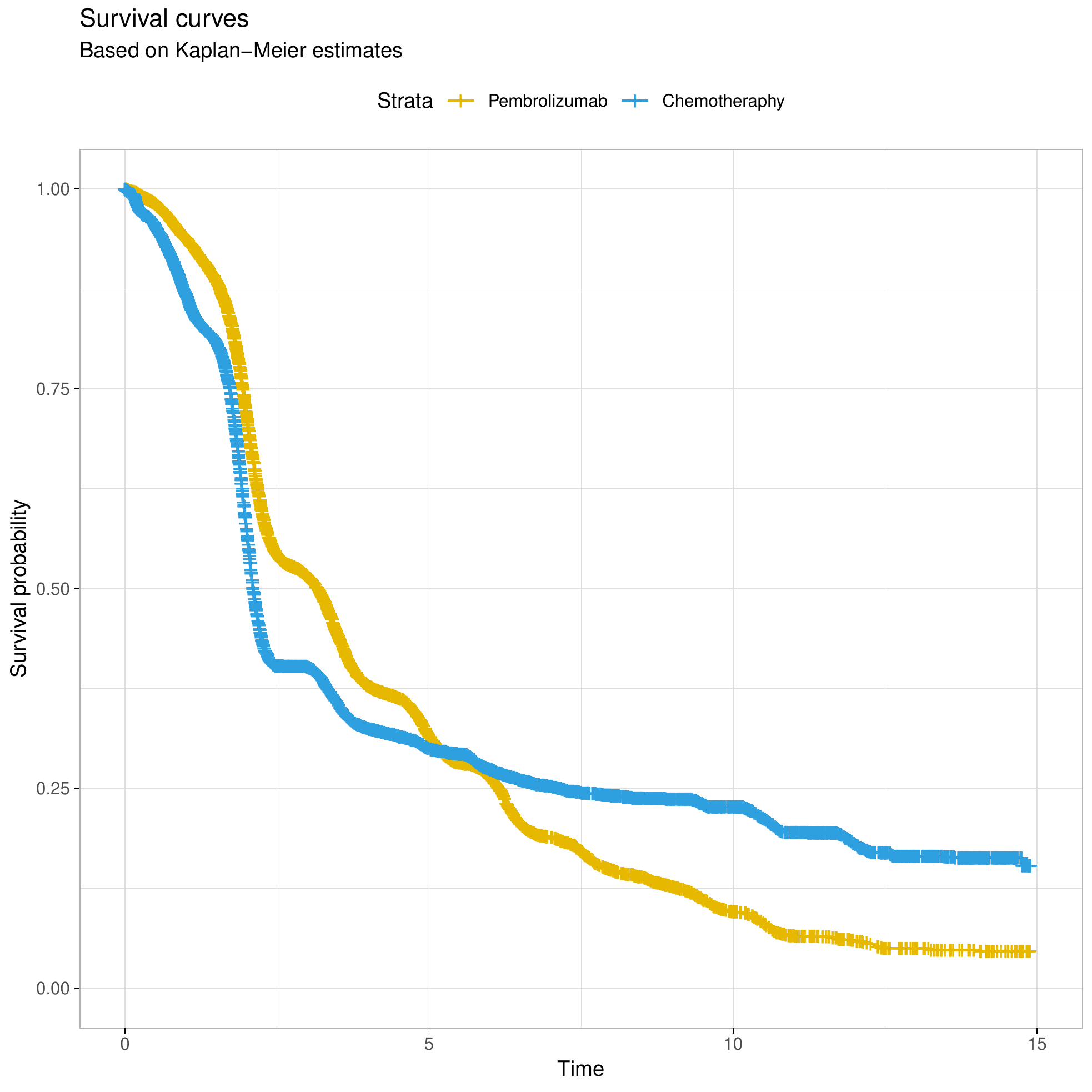}
		\includegraphics[width=8.5cm,height=9cm,keepaspectratio]{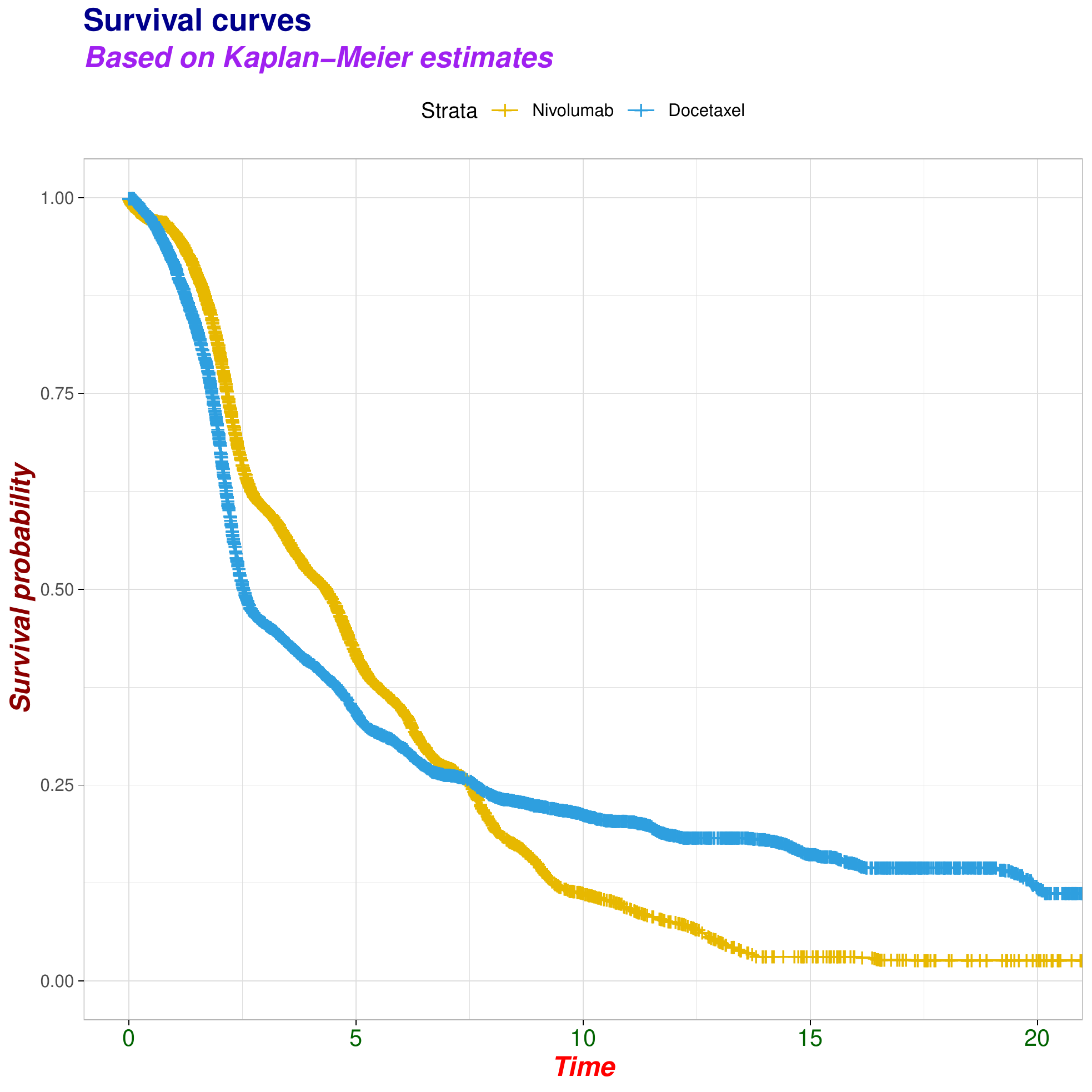}
		\\
		\includegraphics[width=8.5cm,height=9cm,keepaspectratio]{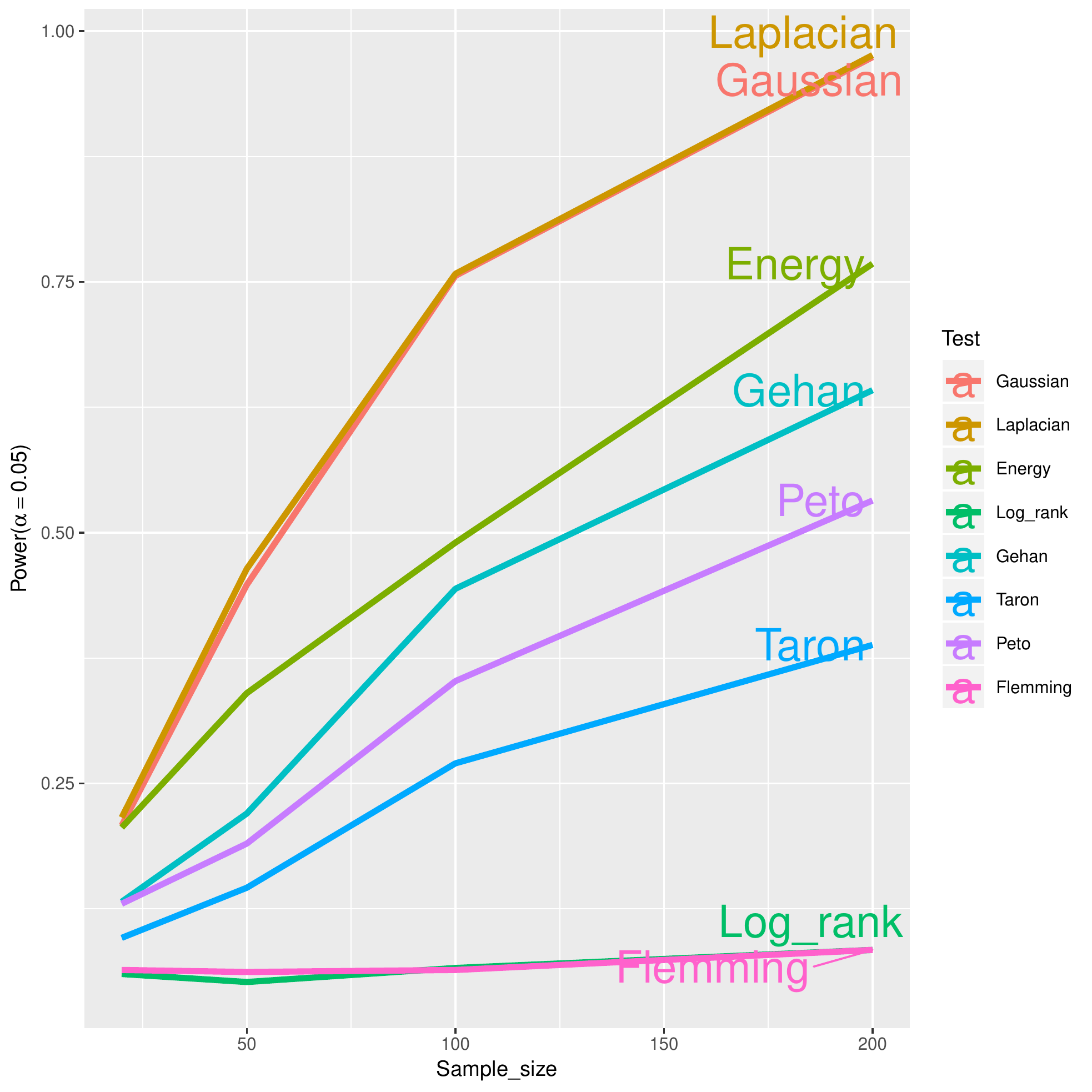}
		\includegraphics[width=8.5cm,height=9cm,keepaspectratio]{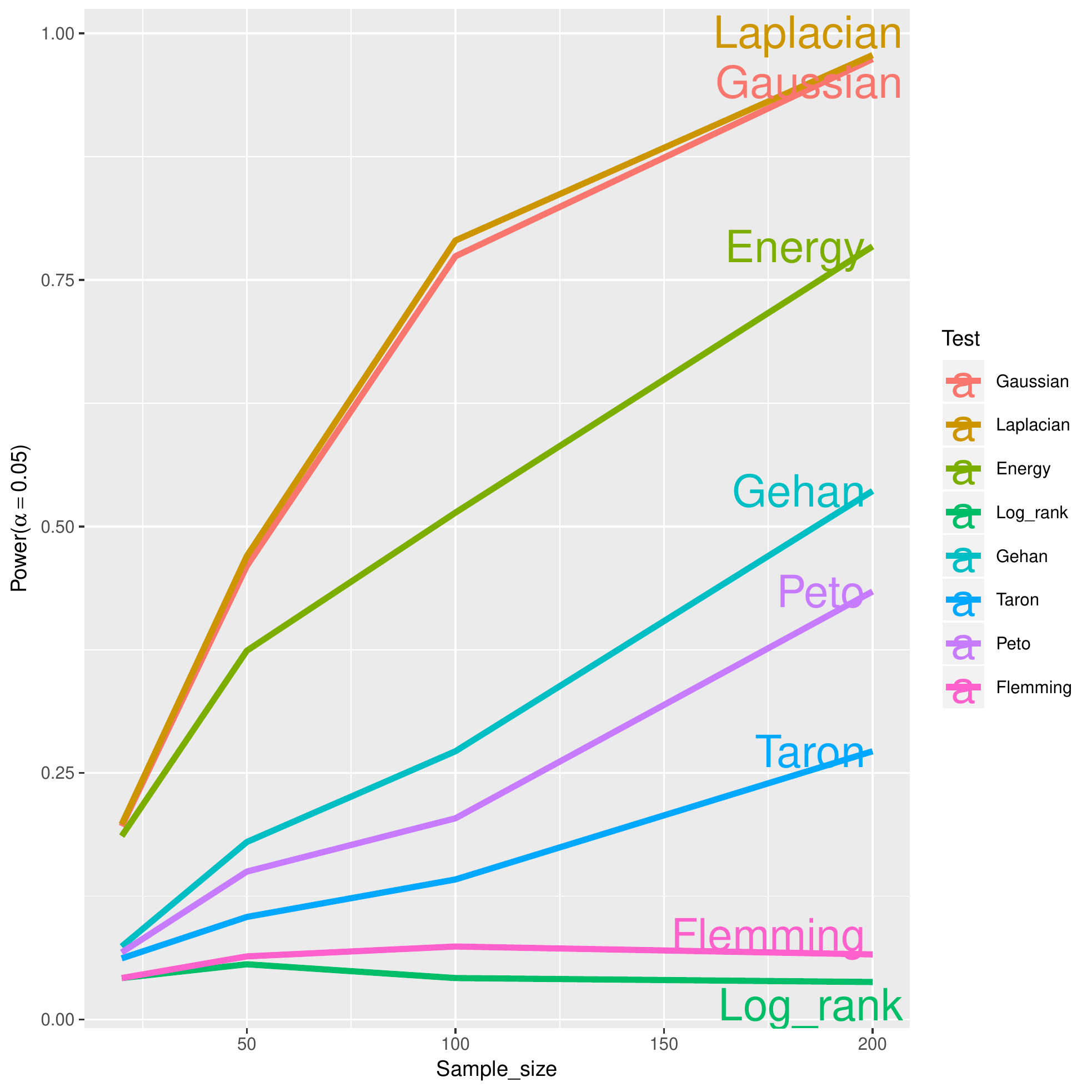}	
	\end{tabular}
	    \end{gather}
	\caption{Statistical power of survival curves extracted \cite{bellmunt2017pembrolizumab} Figure 1-$B$ (left) and \cite{borghaei2015nivolumab} Figure 1-$C$ (right).}\label{fig:fig3}
\end{figure}

\subsubsection{Theoretical proportional hazard ratio in  two population}

We perform $500$ Monte Carlo simulations varying the sample sizes with $50$ individuals from each group, $100$ and $200$, in  the following $11$  cases: $X\sim Exp(1)$ versus $Y\sim Exp(\theta) \text{ }(\text{with} \text{ }\theta\in\{1,1.1,1.2,1.3,1.4,1.5,1.6,1.7,1.8,1.9,2\})$.

We present the results based on the variation of the parameter $\theta$ for each sample size $n$ in  Figure  \ref{fig:fig4}. As we can see in the plots, the log-rank test is usually the most powerful test, as  expected in the situation where this test is optimal from a theoretical point of view. Furthermore, we notice that the use of energy distance with $\alpha=1$ allows us to obtain  better results  in our tests. If we look at Figure \ref{fig:fig1} (right)  where the proportionality hypothesis is not violated either, the results are completely modified. This suggests that the performance of our tests may change completely even if we operate in a context where the hazard functions are proportional.

\begin{figure}[H]
	\centering
	    \hspace*{-11mm}
	\begin{tabular}{cc}
		
		\includegraphics[width=8.5cm,height=8.5cm,keepaspectratio]{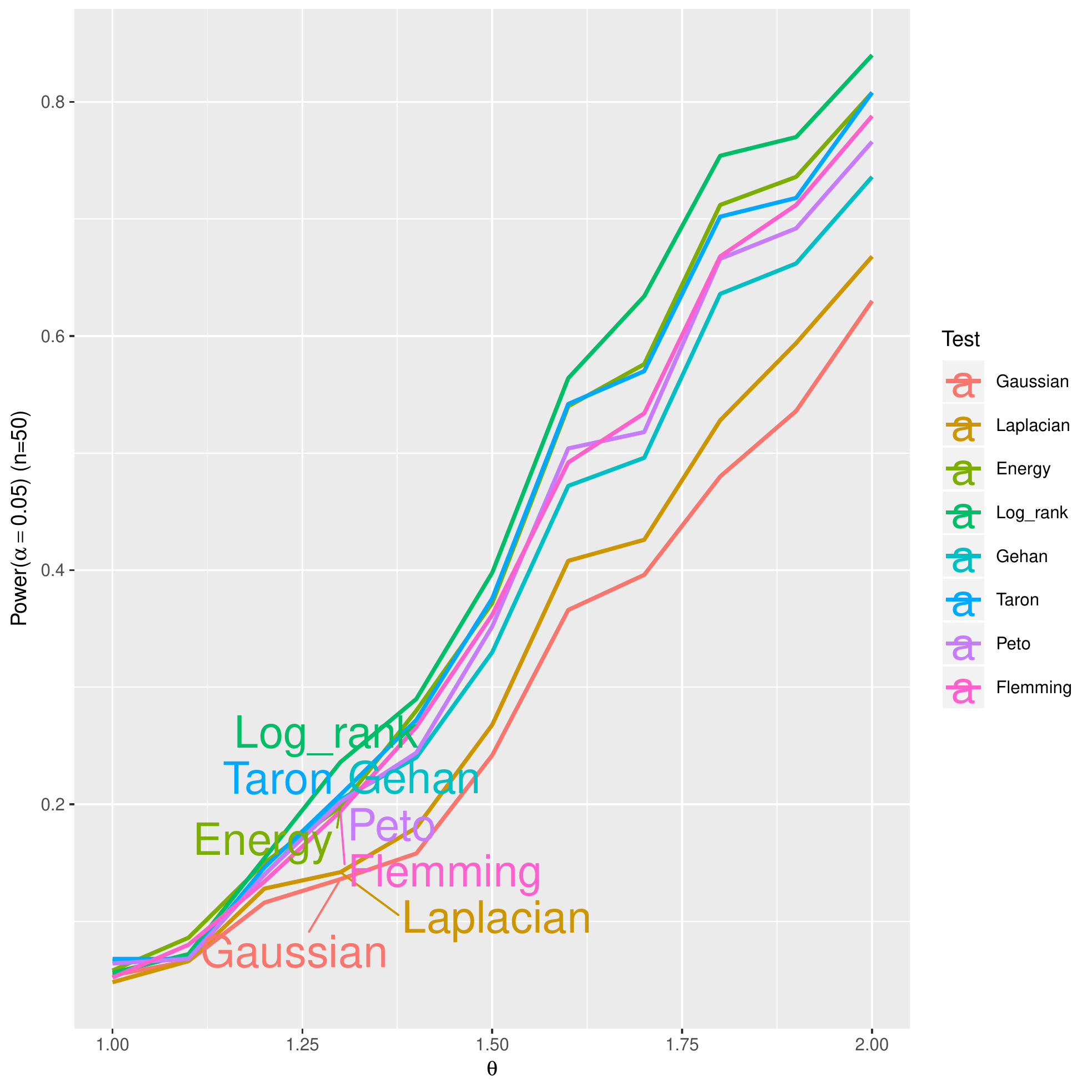}
		\includegraphics[width=8.5cm,height=8.5cm,keepaspectratio]{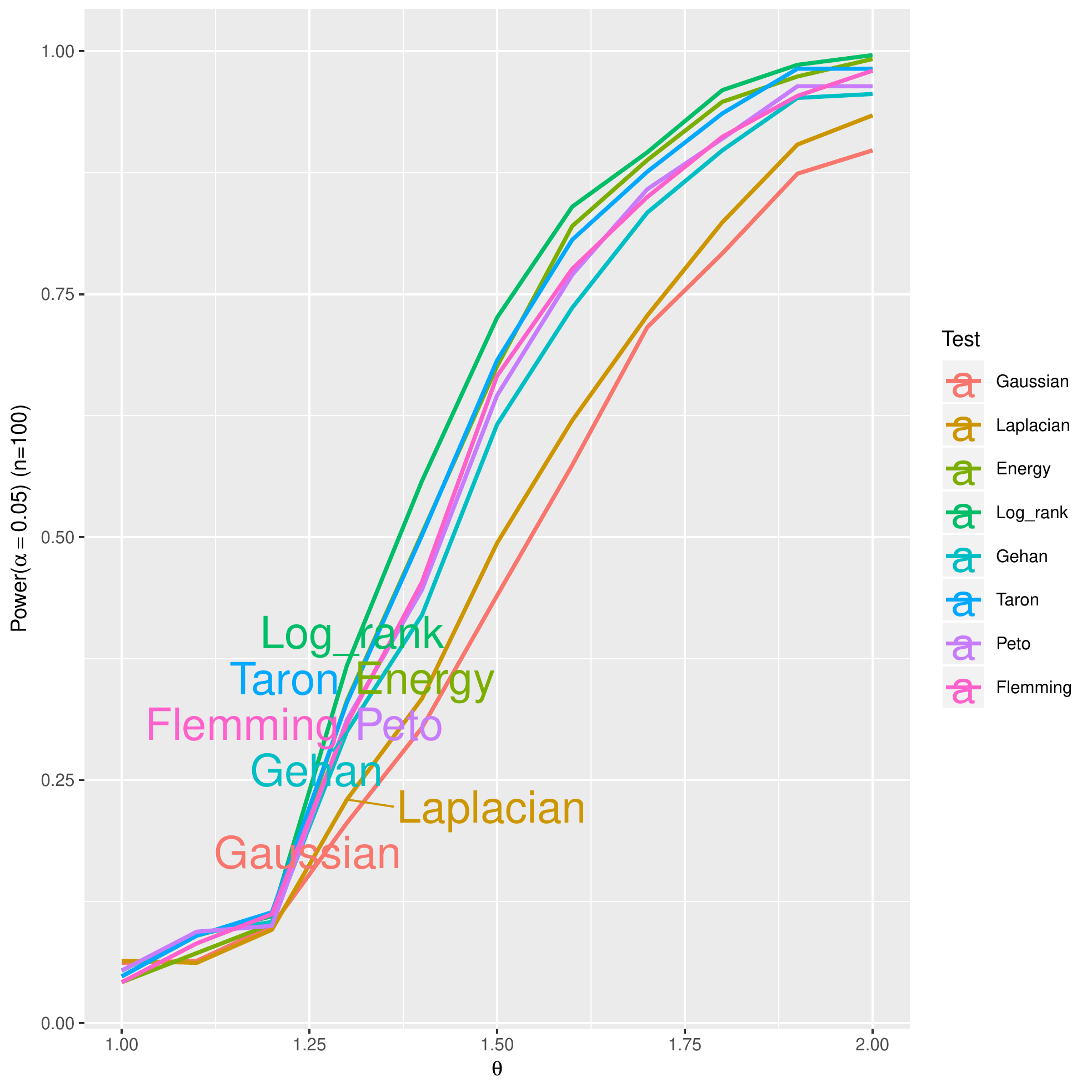}
		
		\\
		\includegraphics[width=9cm,height=9cm,keepaspectratio]{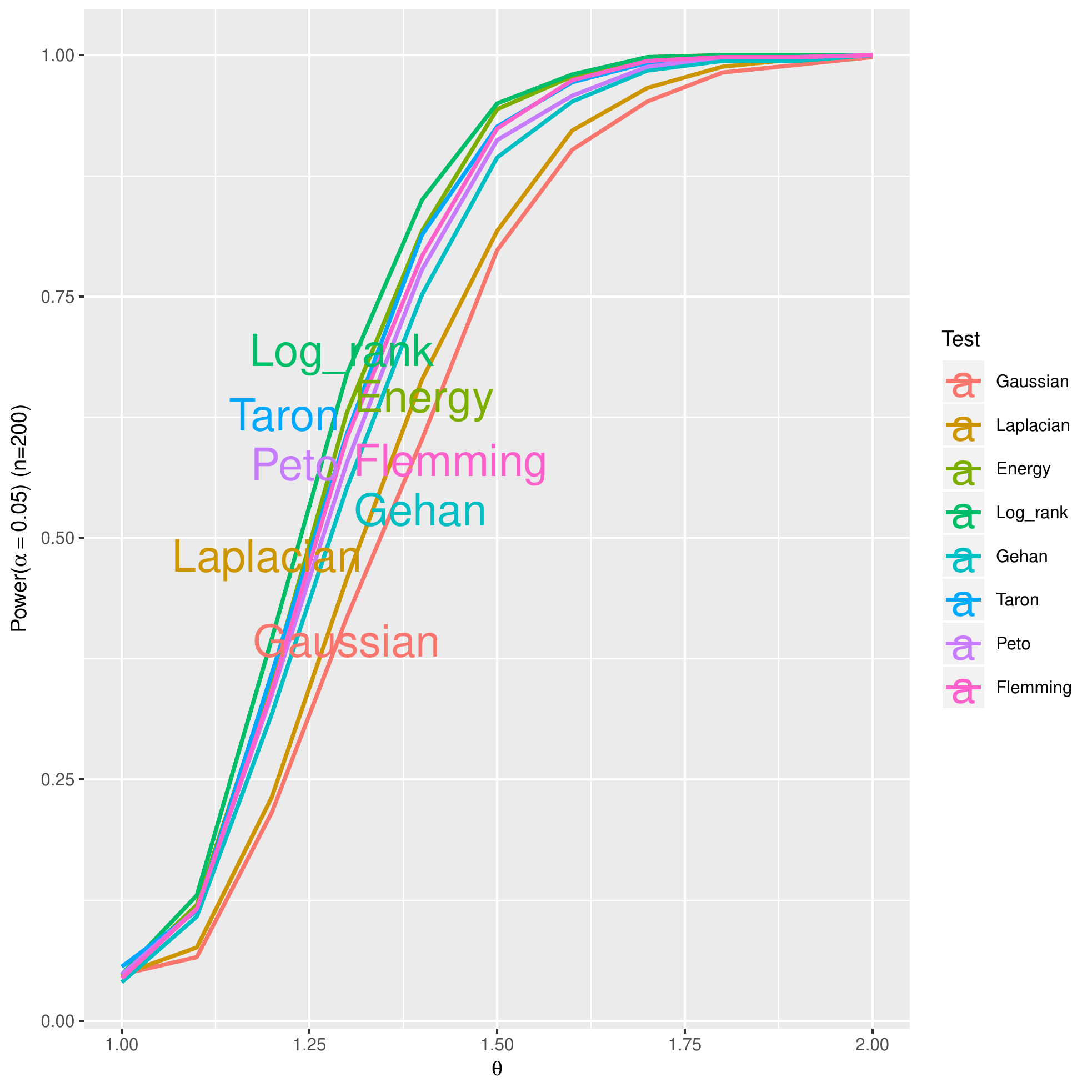}

	\end{tabular}
	
	\caption{Statistical power of study case: Proportional hazard ratio in the two populations.}
	\label{fig:fig4} 
\end{figure}

\section{Example}

To illustrate the potential of the newly proposed  tests in real clinical cases, we use the database from a gastrointestinal tumor study by \cite{stablein1981analysis}. This can be found in the \textbf{\textsf{R}} package ``coin''. The aim of this study is to test whether there are statistically significant differences in the survival curves of two treatments. In Figure \ref{fig:real}, we present the survival curves between the two treatments, observing clear differences between the curves. At first glance, there appears to be a tendency that  the first treatment  increases the long-term survival in comparison to the second. Clearly, the hypothesis of proportional hazards is strongly violated.

\begin{table}[H]
	
	\centering
	\caption{\label{tab:real}$p$-values of the different methods used in the real case}
	\begin{tabular}{rlrlrlr}
		\hline
		& $p$-value &  & $p$-value  & & $p$-value \\
		\hline
		Energy distance $\alpha=1$ & 0.018 & Kernel Gaussian   & 0.004 & Kernel Laplacian  & \textbf{0.002} \\ 
		Logrank & 0.262 &   Gehan & 0.024 & Tarone & 0.075 \\ 
		
		Peto  & 0.030  & Flemming & 0.753 &  \\ 
		\hline
	\end{tabular}
	
\end{table}

\begin{figure}[H]
	\centering
		\scalebox{0.6}{\includegraphics{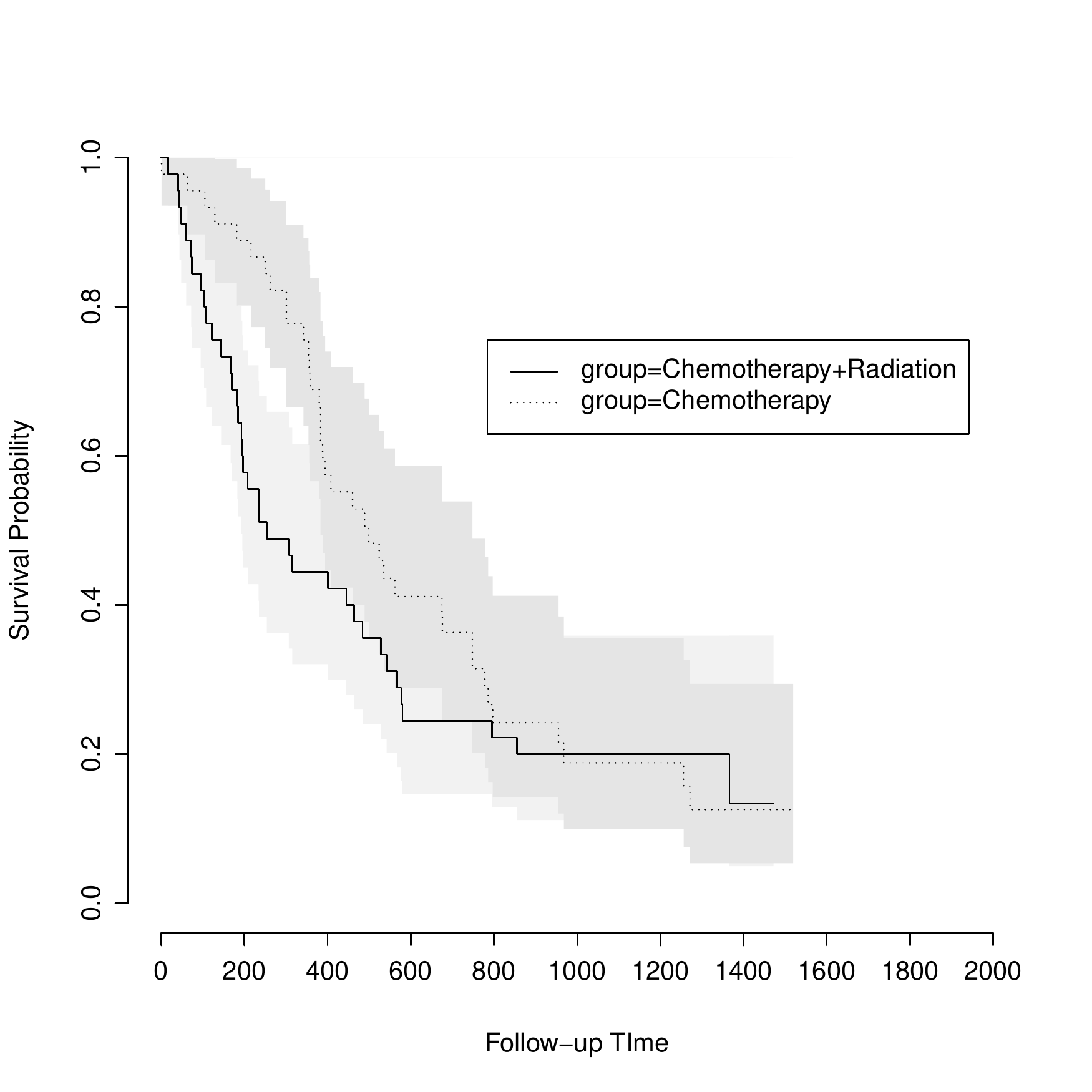}}
	\label{fig:curvasupervivencia}
	\caption{\label{fig:real} Survival curves real case}
\end{figure}

\section{Final remarks}

In this article, a family of consistent tests against all 
alternatives have been proposed to compare the distribution equality between two samples based on the concept of energy distance and kernel mean embeddings. In addition, several theoretical properties of the statistics have been established, together with a set of recommendations on how to select parameters and when to use our tests in situations of clinical interest.

Much work has been done in survival analysis in the context of hazard functions proportionality and for situations where alternatives do not differ much from this situation. In this case, we know that the log-rank tests are optimal \cite{schoenfeld1981asymptotic} and that tests such as Flemming \& Harrington  (\cite{fleming1981class}) offer  a good alternative, choosing a suitable weight function in case of deviations.

If there is evidence that the above situation holds, we do not suggest that  our tests be implemented with the distances/kernels used in this work because the difference in performance with competitors is considerable.

In the  scenario where there is a delay effect on survival in one treatment over another, our tests with the recommended  parameters outperform  the classical tests. These curves share an  important feature: when it is a period of time where one treatment is better than another,  then  this situation is reversed in the long term. In the case where minimal delay effect is observed (as displayed in Figure \ref{fig:fig1} (left)),  the performance of our tests is suboptimal.  However, the situation where our tests have an excellent performance is quite common in clinical trials of immunotherapy \cite{alexander2018hazards} and therefore our tests can be considered an excellent alternative.

The proposed estimators are based on the Kaplan-Meier estimator weights \cite{stute1995statistical}. If there is a high percentage of censored observations along with a small sample size, these methods may not work well (which is very common in all survival analysis methods).  In this case, smoothing the weights may help to increase the power. Alternatively,  if there are apparent  differences at the end of the survival curves, we recommend that one consider  the last observation uncensored \cite{efron1967two}. In either case, this may increase the power of the tests, but also the bias.

The extension of the tests proposed with $k$-samples is analogous to  non-censored methods. There exists a large body of research  in this field, such as Disco analysis \cite{rizzo2010disco} or more recently proposed the kernel methods as in \cite{arm2018kernel}.

The  source code of the new methods is available \url{https://github.com/mmatabuena}. A new  \textbf{\textsf{R}} package  called ``energysurv'' will soon be  launched at \url{https://github.com/mmatabuena} with the proposed methods implemented in \textbf{\textsf{C ++}},  which are believed to be useful for the scientific community. 

\section*{Acknowledgments}

 This work has received financial support from the Conseller\'{i}a de Cultura, Educaci\'{o}n e Ordenaci\'{o}n Universitaria (accreditation 2016-2019, ED431G/08) and the European Regional Development Fund (ERDF). 

The authors are grateful to  Ilia Stepin for his linguistic advice which helped us to improve the clarity and understandability of this article.

\newpage

%
%
%
%
%
%
%
%
%
%
%
%
%
%
%
%
%
%
%
%
%
%
%
%
%

\begin{appendices}
\numberwithin{equation}{section}


\section{Theoretical results}
\label{appendix:proof}

\subsection{Asymptotic  distribution}

The asymptotic  distribution of statistics  under the null hypothesis will be established only for the case of maximum mean discrepancy (MMD). Given the equivalence between the tests based on the kernel mean embeddings and the energy distance \cite{sejdinovic2013equivalence}  this is not restrictive.

%

To begin with, we consider the distribution functions $P_0$, $P_1$, $Q_0$ and $Q_1$ whose meaning was established in the Section \ref{sec:summary} along with the random samples	$\{(X_{j,i},\delta_{j,i})\}_{j=0,1; i=1,\dots,n_j}$.

	Next, let us consider the embeddings $\mu_{P_0}(\cdot)=\int_{0}^{\tau_{0}} K(\cdot,x) dP_{0}(x)\in H_{K}$ and  $\mu_{P_1}(\cdot)=$ $\int_{0}^{\tau_{1}} K(\cdot,x) dP_{1}(x)\in H_{K}$, where $H_{K}$ is the RKHS induced by the kernel $K$ and $\tau_0$, $\tau_1$ are the maximun possible lifetimes defined at the beginning of the Section \ref{tiempos}.

 Under the null hypothesis $P_0=P_1$ and $\tau_0=\tau_1$. Then, $P_0^{\prime}=P_1^{\prime}$ (see  Section \ref{tiempos}), and   $\mu_{P_0^{\prime}}(\cdot)=\mu_{P_1^{\prime}}(\cdot)= \frac{1}{P_0(\tau_0)}\int_0^{\tau_0} K(\cdot,x) dP_0(x)$, where $\mu_{P_0^{\prime}}$ denote the kernel mean embedding \cite{muandet2017kernel} of the distribution $P_0^{\prime}$.

  Given arbitrarily  elements of random sample for each population $X_{0,i},X_{0,j}$ $(i=1,\cdots,n_0,j=1,\cdots,n_0)$,  $X_{1,i^{\prime}},X_{1,j^{\prime}}$ $(i^{\prime}=1,\cdots,n_1,j^{\prime}=1,\cdots,n_1)$,  as $\mu_{P_0^{\prime}}=\mu_{P_1^{\prime}}$, we can replace $K(X_{0,i},X_{0,j})$, $K(X_{1,i^{\prime}},X_{1,j^{\prime}})$ and $K(X_{0,i},X_{1,i^{\prime}})$ with
   $K^{*}(X_{0,i},X_{0,j})$, $K^{*}(X_{1,i^{\prime}},X_{1,j^{\prime}})$ and $K^{*}(X_{0,i},X_{1,i^\prime})$, where  $K^{*}: [0,\tau_0]\times [0,\tau_{0}]\to \mathbb{R}$ is defined as follows

   \begin{gather}
   \begin{align*}
   K^{*}(X_{0,i},X_{0,j}):= <K(X_{0,i},\cdot)-\mu_{P_0^{\prime}},K(X_{0,j},\cdot)-\mu_{P_0^{\prime}}>=  \\
K(X_{0,i},X_{0,j})-\frac{1}{P(\tau_{0})}\int_{0}^{\tau_0}K(X_{0,i},x)dP_0(x)-\frac{1}{P(\tau_{0})}\int_{0}^{\tau_0}K(X_{0,j},x)dP_0(x)+\\ \frac{1}{P_0(\tau_{0})}\frac{1}{P_0(\tau_{0})}   \int_{0}^{\tau_0}K(x,x^{\prime})dP_0(x)dP_0(x^{\prime})=\\
K(X_{0,i},X_{0,j})-E_{X\sim^{} P_0^{\prime}}(K(X_{0,i},X))-E_{X\sim^{} P_0^{\prime}}(K(X_{0,j},X)) +E_{X\sim^{} P_0^{\prime},X^{\prime}\sim^{} P_0^{\prime}}(K(X,X^{\prime})).		
   \end{align*}
      \end{gather}
   
   The previous translation does not change the value between $K(\cdot,\cdot)$ and $K^{*}(\cdot,\cdot)$. This gives the equivalent form of the empirical MMD $\tilde{\gamma}_K^{2}(P_0,P_1)$  (see equation \ref{eqn:estadistico2})
   
  
  \begin{gather}
  \begin{align}
\tilde{\gamma}_K^{2}(P_0,P_1)=  \frac{\sum_{i=1}^{n_0}  \sum_{j\neq i}^{n_0} W^{0}_{i:n_0} W^{0}_{j:n_0} K^{*}(X_{0,(i:n_0)},X_{0,(j:n_0)})}{ \sum_{j=1}^{n_0} \sum_{j\neq i}^{n_0} W^{0}_{i:n_0} W^{0}_{j:n_0}}+ \frac{\sum_{i=1}^{n_1}  \sum_{j\neq i}^{n_1} W^{1}_{i:n_1} W^{1}_{j:n_1}  K^{*}(X_{1,(i:n_1)},X_{1,(j:n_1)})}{\sum_{i=1}^{n_1}  \sum_{j\neq i }^{n_1}W^{1}_{i:n_1} W^{1}_{j:n_1}}\\-2\frac{\sum_{i=1}^{n_0}  \sum_{j=1}^{n_1} W^{0}_{i:n_0} W^{1}_{i:n_1} K^{*}(X_{0,(i:n_0)},X_{1,(j:n_1)})}{\sum_{i=1}^{n_0}  \sum_{j=1}^{n_1} W^{0}_{i:n_0} W^{1}_{i:n_1}}. \nonumber
  \end{align}
  \end{gather}

Now, note  that $K^{*}(\cdot,\cdot)$ is a degenerate kernel

  \begin{gather}
\begin{align*}
E_{X\sim^{} P_0^{\prime}}(K^{*}(X,y))= E_{X}(K(X,y)) - E_{X,X^{\prime}} K(X,X^{\prime}) -E_{X}(K(X,y))+E_{X,X^{\prime}} K(X,X^{\prime})=0 \hspace{0.2cm} \forall y \in [0,\tau_0].
\end{align*}
  \end{gather}

Consequently, in following terms

  \begin{gather}
\begin{align*}
\frac{\sum_{i=1}^{n_0}  \sum_{j\neq i}^{n_0} W^{0}_{i:n_0} W^{0}_{j:n_0} K^{*}(X_{0,(i:n_0)},X_{0,(j:n_0)})}{ \sum_{j=1}^{n_0} \sum_{j\neq i}^{n_0} W^{0}_{i:n_0} W^{0}_{j:n_0}}  \text{ and } \frac{\sum_{i=1}^{n_1}  \sum_{j\neq i}^{n_1} W^{1}_{i:n_1} W^{1}_{j:n_1}  K^{*}(X_{1,(i:n_1)},X_{1,(j:n_1)})}{\sum_{i=1}^{n_1}  \sum_{j\neq i }^{n_1}W^{1}_{i:n_1} W^{1}_{j:n_1}} \nonumber,
\end{align*}
  \end{gather}

we can apply the limits theorems for U-statistics under right censored data \cite{BOSE200284,fernndez2018kaplanmeier}. In particular, we will use the results \cite{fernndez2018kaplanmeier} under the weakest conditions to use the theorems. Under the conditions  assumed in Section \ref{sec:summary} along with the Euclidean distance and kernel of Table \ref{tab:tabla1},  we   can apply  the theoretical results directly.

By the Corollary $2.9$ \cite{fernndez2018kaplanmeier}, under the null hyphotesis and $\tau_0= \tau_1$,  we have:

$$\frac{\sum_{i=1}^{n_0}  \sum_{j\neq i}^{n_0} W^{0}_{i:n_0} W^{0}_{j:n_0} K^{*}(X_{0,(i:n_0)},X_{0,(j:n_0)})}{ \sum_{j=1}^{n_0} \sum_{j\neq i}^{n_0} W^{0}_{i:n_0} W^{0}_{j:n_0}}  \overset{D}{\to} c_1 + \psi$$
and 
$$\frac{\sum_{i=1}^{n_1}  \sum_{j\neq i}^{n_1} W^{1}_{i:n_1} W^{1}_{j:n_1}  K^{*}(X_{1,(i:n_1)},X_{1,(j:n_1)})}{\sum_{i=1}^{n_1}  \sum_{j\neq i }^{n_1}W^{1}_{i:n_1} W^{1}_{j:n_1}}  \overset{D}{\to} c_2 + \psi$$
where $\psi= \sum_{i=1}^{\infty} \lambda_i (\epsilon_i^{2}-1)$, with $\epsilon_{i}$ $i.i.d$ standard normal random variables and $c_1$, $c_2$ are two constants specified in \cite{fernndez2018kaplanmeier} that are not relevant for our purposes.

The structure of the previous limits coincides with the case without censoring  in the degenerate case. More concretely, the limit is $c+\psi$ \cite{korolyuk2013theory} where $c$ is a constant.

 However, for the term

$$2\frac{\sum_{i=1}^{n_0}  \sum_{j=1}^{n_1} W^{0}_{i:n_0} W^{1}_{i:n_1} K^{*}(X_{0,(i:n_0)},X_{1,(j:n_1)})}{\sum_{i=1}^{n_0}  \sum_{j=1}^{n_1} W^{0}_{i:n_0} W^{1}_{i:n_1}}$$

we have a U-statistics of two samples under right censored data in the degenerate case. There are no  theoretical results in the literature.





The deduction of the limits theory in this case is beyond the scope of this work, and will be presented in another paper. In any case, the limiting distribution coincides with the case without censorship. This is

$$ \sqrt{n_0n_1}  \frac{\sum_{i=1}^{n_0}  \sum_{j=1}^{n_1} W^{0}_{i:n_0} W^{1}_{i:n_1} K^{*}(X_{0,(i:n_0)},X_{1,(j:n_1)})}{\sum_{i=1}^{n_0}  \sum_{j=1}^{n_1} W^{0}_{i:n_0} W^{1}_{i:n_1}} \overset{D}{\to} \eta_{\infty}$$

$$\eta_{\infty}=\sum_{j=1}^{\infty} \lambda_j \tau_j \epsilon_j,$$
where $\{\tau_j\}^{\infty}_{j=1}$ and $\{\epsilon_j\}^{\infty}_{j=1}$ are two independence sequences of standard normal random variables.

\subsection{Consistency against all alternatives}

\begin{theorem}\label{positivo}
	Let $S,A$ be  arbitrary metrics spaces with the same topology defined on $\mathbb{R^{+}}$, $S$ contained on $A$ and let $\gamma(x,y)$ be a continuous, symmetric, real function on  $A\times A$. Suppose $X$,$X^{\prime}$, $Y$,$Y^{\prime}$ are independent random variables, $X$,$X^{\prime}$ identically distributed, and $Y$,$Y^{\prime}$ are identically distributed. We suppose, moreover that,
	$\gamma(X,X^{\prime})$, $\gamma(Y,Y^{\prime})$, and $\gamma(X,Y)$ have finite expected values on $A$. Then
	
	\begin{gather}
	\begin{align*}
	2\frac{\int_S \int_S \gamma(x,y)dP(x) dQ(y)}{\int_{S}dP(x)\int_{S}dQ(y)}-	\frac{\int_S \int_S \gamma(x,y)dP(x) dP(y)}{(\int_{S}dP(x))^{2}}-\frac{\int_S \int_S \gamma(x,y)dQ(x) dQ(y)}{(\int_{S}dQ(x))^{2}}\geq 0
	\end{align*}
		\end{gather}
	
	if and only if $\phi$ is negative-definite and where $P$ and $Q$ denote the distribution of $X$ and $Y$ respectively. If $\gamma$ is strictly negative then equality holds if and only if $X$ and $Y$ are identically distributed on $S$.   
\end{theorem}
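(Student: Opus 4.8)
The plan is to recognize the displayed quantity as a generalized energy distance between the \emph{conditional} distributions of $X$ and $Y$ restricted to $S$, and then to invoke the classical correspondence between nonnegativity of energy distances and negative-definiteness of the underlying function $\gamma$ (the typo ``$\phi$'' in the statement is read as $\gamma$).

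First I would normalize. Writing $p = \int_S dP$ and $q = \int_S dQ$, define the probability measures $\tilde P(\cdot) = P(\cdot \cap S)/p$ and $\tilde Q(\cdot) = Q(\cdot \cap S)/q$ on $S$. Each ratio in the statement is then an expectation against $\tilde P$ and/or $\tilde Q$, so the entire left-hand side equals
\begin{equation*}
\mathcal{E}_\gamma(\tilde P, \tilde Q) := 2\iint \gamma\, d\tilde P\, d\tilde Q - \iint \gamma\, d\tilde P\, d\tilde P - \iint \gamma\, d\tilde Q\, d\tilde Q.
\end{equation*}
The finiteness of $E\gamma(X,X')$, $E\gamma(Y,Y')$ and $E\gamma(X,Y)$ guarantees all six integrals converge and that Fubini applies. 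Introducing the signed measure $\mu = \tilde P - \tilde Q$, which has total mass $\mu(S) = 0$, a direct expansion gives the key identity
\begin{equation*}
\mathcal{E}_\gamma(\tilde P, \tilde Q) = -\iint \gamma(x,y)\, d\mu(x)\, d\mu(y).
\end{equation*}

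The equivalence now reduces to the definition of negative-definiteness. Recall that $\gamma$ is negative-definite (of negative type) when $\iint \gamma\, d\nu\, d\nu \le 0$ for every finite signed measure $\nu$ on $S$ with $\nu(S)=0$, equivalently, in atomic form, $\sum_{i,j} c_i c_j \gamma(x_i,x_j)\le 0$ whenever $\sum_i c_i = 0$. The \emph{if} direction is then immediate: applying negative-definiteness to $\nu = \mu$ yields $\mathcal{E}_\gamma(\tilde P, \tilde Q) \ge 0$. For the \emph{only if} direction I would specialize to atomic measures: given $x_1,\dots,x_n \in S$ and reals $c_1,\dots,c_n$ with $\sum_i c_i = 0$, split $c_i = a_i - b_i$ into nonnegative parts and take $\tilde P \propto \sum_i a_i \delta_{x_i}$, $\tilde Q \propto \sum_i b_i \delta_{x_i}$; since $\sum_i a_i = \sum_i b_i$ the two normalizing constants agree and $\mu$ is proportional to $\sum_i c_i \delta_{x_i}$. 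The identity above then shows that the assumed nonnegativity of the left-hand side forces $\sum_{i,j} c_i c_j \gamma(x_i,x_j) \le 0$, so $\gamma$ is negative-definite; a standard approximation of general $\nu$ by atomic measures extends this to the measure-level condition.

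For the final assertion, suppose $\gamma$ is strictly (strongly) negative-definite, meaning $\iint \gamma\, d\nu\, d\nu < 0$ for every \emph{nonzero} signed measure $\nu$ with $\nu(S) = 0$. Then $\mathcal{E}_\gamma(\tilde P, \tilde Q) = 0$ forces $\mu = \tilde P - \tilde Q = 0$, i.e.\ $\tilde P = \tilde Q$, which is precisely the statement that $X$ and $Y$ have the same conditional distribution on $S$; conversely $\tilde P = \tilde Q$ makes $\mu = 0$ and the energy distance vanishes. The main obstacle I anticipate is purely measure-theoretic: making rigorous the continuous (signed-measure) form of negative-definiteness, in particular justifying the Fubini interchanges under only second-moment-type integrability, and establishing the density of atomic measures needed to pass between the pointwise condition $\sum_{i,j} c_i c_j\gamma(x_i,x_j)\le 0$ and the integral condition $\iint \gamma\,d\nu\,d\nu \le 0$. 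The strict case additionally requires that strong negative type genuinely yields injectivity, i.e.\ that no nonzero signed measure of total mass zero annihilates $\gamma$, which is where continuity of $\gamma$ and the Schoenberg/Lyons embedding viewpoint would be invoked.
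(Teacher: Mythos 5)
Your proposal is correct, and its essential reduction is the same as the paper's: both arguments pass to the normalized (conditional) measures $\tilde P(\cdot)=P(\cdot\cap S)/\int_S dP$ and $\tilde Q(\cdot)=Q(\cdot\cap S)/\int_S dQ$ on $S$, check that integrability and continuity survive the restriction, and thereby reduce the displayed inequality to the classical correspondence between energy distances and negative-definite kernels. Where you genuinely differ is in how that correspondence is handled: the paper invokes Theorem 1 of Sz\'ekely (2005) as a black box and applies it to the conditional distributions, whereas you re-derive it from the identity $\mathcal{E}_\gamma(\tilde P,\tilde Q)=-\iint \gamma\, d\mu\, d\mu$ with $\mu=\tilde P-\tilde Q$ and $\mu(S)=0$, getting the ``if'' direction immediately and the ``only if'' direction by testing against atomic signed measures. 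Your route is more self-contained and makes the mechanism of the inequality transparent; the paper's citation, in exchange, silently absorbs precisely the two points you flag as obstacles, namely the passage between the point-level condition $\sum_{i,j}c_ic_j\gamma(x_i,x_j)\le 0$ and its signed-measure form, and the equality case, where point-level strict negative definiteness does not by itself yield measure-level injectivity (the negative type versus strong negative type distinction of Lyons). You sidestep the latter by adopting the measure-level condition as the meaning of ``strictly negative,'' which is a mild rereading of the hypothesis but matches what the cited theorem effectively assumes, so your remaining gaps are of the same kind the paper delegates to its reference rather than defects peculiar to your argument.
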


\begin{proof}
	%
	%
	
	By Theorem $1$ \cite{szekely2005new}, it is verified:

	\begin{equation}
	2\int_A \int_A \gamma(x,y)dP(x) dQ(y)-	\int_A	 \int_A \gamma(x,y)dP(x) dP(y)-\int_A \int_A \gamma(x,y)dQ(x) dQ(y)\geq 0,
	\end{equation}	
	if and only if $\phi$ is negative-definite. If $\gamma$ is strictly negative then equality holds if and only if $X$ and $Y$ are identically distributed on $A$.
	
	Now, we define the following random variables on $S$, $X^{*}$, $Y^{*}$ with distribution function $P^{\prime}$, $Q^{\prime}$ respectively, as follows :
	
	$dP^{\prime}(x)= c_1 dP(x)$ and $dQ^{\prime}(x)= c_2 dP(x)$, where $c_1= \frac{1}{\int_{S} dP(x)}$ and $c_2= {\frac{1}{\int_{S} dQ(x)}}$, 
	and we consider their copies $X^{*\prime}$,$Y^{*\prime}$.  As $\gamma(X,X^{\prime})$, $\gamma(Y,Y^{\prime})$, and $\gamma(X,Y)$ have finite expected values in $A$, then $\gamma(X^{*},X^{*\prime})$, $\gamma(Y^{*},Y^{*\prime})$, and $\gamma(X^{*},Y^{*})$ have finite expected values in $S$. Moreover, let $\gamma(x,y)$ be a continuous, symmetric, real function in  $S\times S$.

	This leads to:

	\begin{gather}
	\begin{align*}
	2c_1c_2\int_S \int_S \gamma(x,y)dP(x) dQ(y)-	c_1^{2}\int_S \int_S \gamma(x,y)dP(x) dP(y)-c_2^{2}\int_S \int_S \gamma(x,y)dQ(x) dQ(y)\geq 0.
	\end{align*}
		\end{gather}

	if and only if $\phi$ is negative definite,  and 
	
		\begin{gather}	
	\begin{align*}
	2c_1c_2\int_S \int_S \gamma(x,y)dP(x) dQ(y)-	c_1^{2}\int_S \int_S \gamma(x,y)dP(x) dP(y)-c_2^{2}\int_S \int_S \gamma(x,y)dQ(x) dQ(y)= 0.
	\end{align*}
			\end{gather}
	
	If $X^{*}$ and $Y^{*}$ are identically distributed in $S$ (with $\phi$ being  strictly negative) or equivalent $P(t)= Q(t)$ $\forall t$  $\in S$.

\end{proof}

\begin{theorem}{\label{consistencia}}
	
	Let $X_{j,i}= min(T_{j,i},C_{j,i})\sim^{i.i.d. } P_{c(j)} $ and  $\delta_{j,i}= 1\{X_{j,i}= T_{j,i}\}$	$(j=0,1; i=1,\dots,n_j)$ with $P_{c(j)}$ $(j=0,1)$ and under the conditions  assumed in Section \ref{sec:summary} imposed on the variables $T_{j,i}\sim^{i.i.d.} P_{j},C_{j,i}\sim^{i.i.d.} Q_{j}$ $(j=0,1; i=1,\dots,n_j)$. Then:

	\begin{gather}
	\begin{align*}
	\tilde{\epsilon}_\alpha(P_0,P_1) \overset{n_0,n_1\to \infty}{\to} \epsilon_{c(\alpha)}(P_0,P_1)= 2\frac{\int_{0}^{\tau_0} \int_{0}^{\tau_1}	 ||x-y||^{\alpha} dP_0^{*}(x) dP_1^{*}(y)}{\int_{0}^{\tau_0} \int_{0}^{\tau_1} dP_0^{*}(x) dP_1^{*}(y)}\\
	-\frac{\int_{0}^{\tau_0} \int_{0}^{\tau_0}	 ||x-y||^{\alpha} dP_0^{*}(x) dP_0^{*}(y)}{\int_{0}^{\tau_0} \int_{0}^{\tau_0} dP_0^{*}(x) dP_0^{*}(y)} -\frac{\int_{0}^{\tau_1}\int_{0}^{\tau_1}	 ||x-y||^{\alpha} dP_1^{*}(x) dP_1^{*}(y)}{\int_{0}^{\tau_1}\int_{0}^{\tau_1} dP_1^{*}(x) dP_1^{*}(y)},
	\end{align*} 
		\end{gather}
		\begin{gather}
	\begin{align*}
	\tilde{\gamma}_K(P_0,P_1) \overset{n_0,n_1\to \infty}{\to} \gamma_{c(K)}(P_0,P_1)= 2  \frac{\int_{0}^{\tau_0} \int_{0}^{\tau_1}	 K(x,y) dP_0^{*}(x) dP_1^{*}(y)}{\int_{0}^{\tau_0} \int_{0}^{\tau_1} dP_0^{*}(x) dP_1^{*}(y)}\\
	-\frac{\int_{0}^{\tau_0} \int_{0}^{\tau_0}	 K(x,y) dP_0^{*}(x) dP_0^{*}(y)}{\int_{0}^{\tau_0} \int_{0}^{\tau_0}	  dP_0^{*}(x) dP_0^{*}(y)}-\frac{\int_{0}^{\tau_1}\int_{0}^{\tau_1}	K(x,y) dP_1^{*}(x) dP_1^{*}(y)}{\int_{0}^{\tau_1}\int_{0}^{\tau_1} dP_1^{*}(x) dP_1^{*}(y)},
	\end{align*} 
			\end{gather}

	where
	
	$P_0^{*}(x)= \left\{ \begin{array}{lcc}
	P_0(x) &   if  & x < \tau_{0} \\
	\\ P_0(\tau_0^{-})+1\{\tau_0\in A^{1}\}P_0(\tau_0) &  if & x \geq \tau_0 \\
	\end{array}
	\right.$
	
	and
	
	$P_1^{*}(x)= \left\{ \begin{array}{lcc}
	P_1(x) &   if  & x < \tau_{1} \\
	\\ P_1(\tau_1^{-})+1\{\tau_1\in A^{1}\}P_1(\tau_1) &  if & x \geq \tau_1. \\
	
	\end{array}
	\right.$
	
	Here,
	$\tau_0=\inf\{x:1-P_{c(0)}(x)=0\}$, $\tau_1=\inf\{x:1-P_{c(1)}(x)=0\}$, $A^{0}=\{x\in \mathbb{R}| P_{c(0)}\{x\}>0\}$ and $A^{1}=\{x\in \mathbb{R}| P_{c(1)}\{x\}>0\}$. 
\end{theorem}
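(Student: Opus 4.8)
The plan is to read each of the six sums building $\tilde{\epsilon}_\alpha(P_0,P_1)$ and $\tilde{\gamma}_K(P_0,P_1)$ as integrals against the Kaplan--Meier (sub)measures $\hat P_0=\sum_i W^0_{i:n_0}\,\delta_{X_{0,(i:n_0)}}$ and $\hat P_1=\sum_j W^1_{j:n_1}\,\delta_{X_{1,(j:n_1)}}$, and to show that, after normalization by the total masses, each ratio converges almost surely to the advertised limit. Since $\max_i W^l_{i:n_l}\to 0$ almost surely, the diagonal contributions $\sum_i (W^l_{i:n_l})^2\phi(X_{l,(i)},X_{l,(i)})$ are negligible, so the $U$- and $V$-versions (\ref{eqn:estadistico1})--(\ref{eqn:estadistico4}) share the same limit and it suffices to treat the $V$-statistics, reducing everything to strong laws for Kaplan--Meier integrals of the two integrands $\phi(x,y)=\|x-y\|^\alpha$ and $\phi=K$.

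For the within-sample terms I would invoke the strong law for Kaplan--Meier integrals of Stute and Wang \cite{stute1995statistical} and its multivariate extension \cite{stute1993multi,bose1999strong}: under integrability of the integrand against the limiting (sub)distribution, $\sum_i\sum_j W^l_{i:n_l}W^l_{j:n_l}\,\phi(X_{l,(i)},X_{l,(j)})\to\int\!\!\int\phi\,dP_l^* dP_l^*$ and $\sum_i W^l_{i:n_l}\to P_l^*(\tau_l)$ almost surely, where $P_l^*$ is precisely the Stute--Wang limit distribution appearing in the statement, its atom at $\tau_l$ being inherited from the mass that the Kaplan--Meier estimator piles up at the largest observation. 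The integrability hypotheses follow from the standing assumptions: for the energy integrand $\|x-y\|^\alpha\le C(|x|^\alpha+|y|^\alpha)$, and since $\alpha\le 2$ Lyapunov's inequality gives $\int|x|^\alpha dP_l^*<\infty$ from $E(T_{l}^2)<\infty$; for the kernels of Table \ref{tab:tabla1}, $\phi=K$ is bounded, so integrability is immediate.

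The main obstacle is the two-sample cross term $\sum_i\sum_j W^0_{i:n_0}W^1_{j:n_1}\|X_{0,(i)}-X_{1,(j)}\|^\alpha$, which is a product of two independent Kaplan--Meier integrals against the non-product, unbounded integrand $\|x-y\|^\alpha$ and is not directly covered by the one-sample laws. Here I would exploit the independence of the two samples: setting $h(x)=\int_0^{\tau_1}\|x-y\|^\alpha\,dP_1^*(y)$, which is continuous and $P_0^*$-integrable, I would first establish that the inner sum $\sum_j W^1_{j:n_1}\|x-X_{1,(j)}\|^\alpha$ converges to $h(x)$ uniformly over $x\in[0,\tau_0]$ --- a uniform strong law obtained by combining the uniform almost-sure consistency of $\hat P_1$ with a truncation/uniform-integrability argument controlling the tails through $E(T_1^2)<\infty$ --- and then apply the one-sample law of \cite{stute1995statistical} to obtain $\sum_i W^0_{i:n_0}h(X_{0,(i)})\to\int_0^{\tau_0}h\,dP_0^*$. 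Equivalently, this step can be phrased as the almost-sure weak convergence of the product measure $\hat P_0\otimes\hat P_1$ to $P_0^*\otimes P_1^*$ together with uniform integrability of $\|x-y\|^\alpha$.

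Finally I would assemble the pieces. Each denominator converges to a strictly positive constant ($P_l^*(\tau_l)\ge P_l(\tau_l^-)>0$ for nondegenerate $P_l$), so the continuous mapping theorem yields almost-sure convergence of the ratios to the corresponding ratios of limits, giving $\tilde{\epsilon}_\alpha(P_0,P_1)\to\epsilon_{c(\alpha)}(P_0,P_1)$ and $\tilde{\gamma}_K(P_0,P_1)\to\gamma_{c(K)}(P_0,P_1)$. For the kernel case the argument is identical with $\phi=K$ bounded, which also removes all tail and uniform-integrability difficulties, so the genuinely delicate work is confined to the unbounded energy integrand in the cross term; combined with the positivity characterization of Theorem \ref{positivo}, these limits are what subsequently yields consistency against all alternatives.
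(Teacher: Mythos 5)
Your proposal is correct in substance and shares the paper's skeleton --- strong laws for Kaplan--Meier weighted sums applied to each numerator and denominator (with $h(x,y)=\|x-y\|^{\alpha}$, $h=K$, and $h\equiv 1$ for the normalizing masses), followed by the algebra of almost-sure convergence for sums and ratios --- but it genuinely diverges on the two-sample cross term. The paper does not build that limit by hand: it quotes the multi-sample strong law for Kaplan--Meier $U$-statistics of \cite{stute1993multi}, which directly yields $\sum_{i}\sum_{j}W^{0}_{i:n_0}W^{1}_{j:n_1}h(X_{0,(i:n_0)},X_{1,(j:n_1)})\to\int\int h\,dP_0^{*}\,dP_1^{*}$ for integrable kernels $h$, and it quotes Theorem $1$ of \cite{bose1999strong} for the within-sample terms, which is stated for the off-diagonal sums, so no $U$-versus-$V$ (diagonal negligibility) step is needed. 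Your replacement argument for the cross term is more self-contained, but its first formulation has a flaw: uniform convergence of $x\mapsto\sum_{j}W^{1}_{j:n_1}\|x-X_{1,(j:n_1)}\|^{\alpha}$ over $[0,\tau_0]$ cannot hold when $\tau_0=\infty$ (a common case, e.g.\ exponential lifetimes and censoring, as in the paper's own simulations), since for large $x$ the discrepancy contains a term of order $x^{\alpha}\bigl(\sum_{j}W^{1}_{j:n_1}-P_1^{*}(\tau_1)\bigr)$, which is not uniformly small on an unbounded range. Your alternative formulation --- almost-sure weak convergence of $\hat P_0\otimes\hat P_1$ to $P_0^{*}\otimes P_1^{*}$ together with uniform integrability of $\|x-y\|^{\alpha}$, which the assumed second moments deliver for $\alpha<2$ (the boundary case $\alpha=2$ needs a separate remark) --- is the version to keep. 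What your route buys is rigor the paper leaves implicit: explicit moment and integrability checks, the treatment of the diagonal terms so that $U$- and $V$-statistics share one limit, and the role of continuity of $P_{c(j)}$ in removing the atoms from $P_l^{*}$. What the paper's route buys is brevity, since both nontrivial limit theorems are taken off the shelf.
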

\begin{proof}
	
	%

	The proof consists of repeatedly applying the strong laws of large numbers for $U$ Kaplan-Meier statistics with two samples \cite{stute1993multi}, with the convergence results for the $U$ statistic of degree two for randomly censored data \cite{bose1999strong}. 
	
	According to \cite{stute1993multi}:

	$$ \sum_{i=1}^{n_1} \sum_{j=1}^{n_1} W^{0}_{i:n_0} W^{1}_{i:n_1} h(X_{0,(i:n_0)},X_{1,(j:n_1)}) \overset{n_0,n_1\to \infty}{\to} \int_{0}^{\tau_0} \int_{0}^{\tau_1} h(x,y) dP_0^{*}(x)dP_1^{*}(y) $$
	where $h$ is a given kernel of degree two such that	
	$$ \int h(x,y) dP_0(x) dP_1(y)< \infty. $$

	Note, by hypothesis,  $ P_{c(j)}$ $(j = 0,1) $ is a continuous distribution function. Then, $A^{0}$ and $A^{1}$ are empty sets, and therefore $P^{*}_0(x)= P_0(x)$ $\forall \in[0,\tau_0]$  and $P^{*}_1(x)= P_1(x)$ $\forall\in[0,\tau_1]$.

	Applying the previous result with $h(x,y)=1$ to the following expressions, along with the properties of convergence in probability,  we have:
	
	$$    \frac{\sum_{i=1}^{n_1} \sum_{j=1}^{n_1} W^{0}_{i:n_0} W^{1}_{i:n_1} h(X_{0,(i:n_0)},X_{1,(j:n_1)})}{\sum_{i=1}^{n_1} \sum_{j=1}^{n_1} W^{0}_{i:n_0} W^{1}_{i:n_1}} \overset{n_0,n_1\to \infty}{\to} \frac{\int_{0}^{\tau_0} \int_{0}^{\tau_1} h(x,y) dP_0^{*}(x) dP_1^{*}(y)}{\int_{0}^{\tau_0} \int_{0}^{\tau_1}  dP_0^{*}(x) dP_1^{*}(y)}. $$

	Using Theorem $1$ of \cite{bose1999strong},  it is also verified that

	$$\frac{\sum_{i=1}^{n_0}  \sum_{j\neq i}^{n_0} W^{0}_{i:n_0} W^{0}_{j:n_0} h(X_{0,(i:n_0)},X_{0,(j:n_0)})}{\sum_{i=1}^{n_0}  \sum_{j\neq i}^{n_0} W^{0}_{i:n_0} W^{0}_{j:n_0}} \overset{n_0 \to \infty}{\to} \frac{\int_{0}^{\tau_0} \int_{0}^{\tau_0} h(x,y)	  dP_0^{*}(x) dP_0^{*}(y)}{\int_{0}^{\tau_0} \int_{0}^{\tau_0} dP_0^{*}(x) dP_0^{*}(y)}, $$
	     and	
	$$\frac{\sum_{i=1}^{n_1}  \sum_{j\neq i}^{n_1} W^{1}_{i:n_1} W^{1}_{j:n_1} h(X_{1,(i:n_1)},X_{1,(j:n_1)})}{\sum_{i=1}^{n_1}  \sum_{j\neq i}^{n_1} W^{1}_{i:n_1} W^{1}_{j:n_1}} \overset{n_1 \to \infty}{\to} \frac{\int_{0}^{\tau_1} \int_{0}^{\tau_1} h(x,y)	  dP_1^{*}(x) dP_1^{*}(y)}{\int_{0}^{\tau_1} \int_{0}^{\tau_1} dP_1^{*}(x) dP_1^{*}(y)}. $$
	
	Finally, taking  $h(x,y)$ as  $||x-y||^{\alpha}$ or $h(x,y)=K(x,y)$ and applying the properties of convergence  in probability of the sum of two random variables, the desired result is obtained.
	
\end{proof}

\begin{theorem}
		Let  $X_{j,i}= min(T_{j,i},C_{j,i})\sim^{ \text{i.i.d.}  } P_{c(j)} $ and  $\delta_{j,i}= 1\{X_{j,i}= T_{j,i}\}$	$(j=0,1; i=1,\dots,n_j)$ with $P_{c(j)}$ $(j=0,1)$. Suppose also that  the conditions  stated in Section \ref{sec:summary} hold  for the random variables  $T_{j,i}\sim^{ \text{i.i.d.}  }  P_{j},\,\,\,\,\,C_{j,i} \sim^{ \text{i.i.d.}  }  Q_{j}$ $(j=0,1; i=1,\dots,n_j)$ . Further assume that $\tau_0=\tau_1$  or the  support of the distribution functions $P_0$ and $P_1$ is  contained in the intervals $ [0,\tau_0] $ and $[0,\tau_1]$, respectively. Then, 
	for testing the null  $H_0  \,:\, P_{0}(t)  =  P_{1}(t) \,\,\,\,\,\forall  t\in [0,\tau_1] $	
	the statistics $T_{\tilde{\epsilon}_\alpha}$ and   $T_{\tilde{\gamma}_K^{2}}$  determine tests that are consistent against all fixed  alternatives with continuous random variables.

\end{theorem}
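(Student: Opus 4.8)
The plan is to reduce consistency to two facts: under a fixed continuous alternative the appropriately scaled observed statistic diverges, while the $(1-\alpha)$-quantile of the permutation reference distribution stays bounded in probability. Since the test rejects exactly when $T_{\tilde{\gamma}_K^{2}}$ (or $T_{\tilde{\epsilon}_\alpha}$) exceeds that permutation quantile, establishing these two facts immediately forces the rejection probability to tend to one. I will carry out the argument for $T_{\tilde{\gamma}_K^{2}}$; by the equivalence between energy distance and kernel embeddings invoked earlier, the energy-distance statistic is handled identically.

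First I would pin down the limit of the unscaled statistic. By Theorem \ref{consistencia}, under the stated moment and continuity conditions $\tilde{\gamma}_K^{2}(P_0,P_1)\to\gamma_{c(K)}(P_0,P_1)$ as $n_0,n_1\to\infty$, and because the variables are continuous the truncated measures satisfy $P_0^{*}=P_0$ and $P_1^{*}=P_1$ on $[0,\tau_0]$ and $[0,\tau_1]$. The limit $\gamma_{c(K)}(P_0,P_1)$ is exactly the normalized quadratic form of Theorem \ref{positivo} with $S=[0,\tau_1]$ and $\gamma$ the negative-type semi-metric associated to the characteristic kernel $K$. Since the Gaussian and Laplacian kernels induce strictly negative-definite semi-metrics, Theorem \ref{positivo} yields $\gamma_{c(K)}(P_0,P_1)\geq 0$ with equality iff $P_0$ and $P_1$ agree on $[0,\tau_1]$. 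Under the alternative $P_0\neq P_1$ on $[0,\tau_1]$, so $\gamma_{c(K)}(P_0,P_1)=:c>0$, and therefore $T_{\tilde{\gamma}_K^{2}}=\frac{n_0 n_1}{n_0+n_1}\tilde{\gamma}_K^{2}(P_0,P_1)=\frac{n_0 n_1}{n_0+n_1}(c+o_P(1))\to\infty$ in probability, using $\frac{n_0 n_1}{n_0+n_1}\to\infty$.

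Next I would control the permutation threshold. Under a relabeling $I\in\mathcal{S}$ the pairs $(X_{j,i},\delta_{j,i})$ are reassigned to the two groups, so each permuted group is (conditionally) an i.i.d.\ draw from the pooled mixture $\bar P=\frac{n_0}{n_0+n_1}P_{c(0)}+\frac{n_1}{n_0+n_1}P_{c(1)}$ of the observables. Applying Theorem \ref{consistencia} to the permuted samples, whose two halves now share a common law, shows that the unscaled permuted statistic converges to $\gamma_{c(K)}(\bar P,\bar P)=0$; equivalently it is of the degenerate order $O_P\!\big((n_0+n_1)^{-1}\big)$, so the scaled permuted statistic $\frac{n_0 n_1}{n_0+n_1}\tilde{\gamma}_K^{2}$ evaluated on permuted labels is $O_P(1)$ and its empirical $(1-\alpha)$-quantile $c_{\alpha}^{\mathrm{perm}}$ over $\mathcal{S}$ remains bounded in probability. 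Combining with the divergence of the observed statistic yields $P\big(T_{\tilde{\gamma}_K^{2}}>c_{\alpha}^{\mathrm{perm}}\big)\to 1$.

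The hard part will be making the permutation-quantile control rigorous, and two issues arise. The exchangeability underlying the mixture argument requires the censoring mechanisms in the two groups to coincide; when they differ, permuting labels no longer produces draws from a common law, and one must instead appeal to the resampling scheme of \cite{wang2010testing} or truncate to a common support as in the remark following the statement. Second, the scaled cross term in $\tilde{\gamma}_K^{2}$ is a two-sample degenerate $U$-statistic under censoring, whose limit theory is acknowledged to be unavailable; to claim it is $O_P(1)$ rigorously one would either establish tightness of this term directly or, more economically for consistency, bound the permutation quantile by a crude moment estimate showing it grows as $o_P\!\big(\tfrac{n_0 n_1}{n_0+n_1}\big)$. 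Once any such sub-linear bound on the permutation quantile is secured, the consistency conclusion follows immediately from the divergence established above.
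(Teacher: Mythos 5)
Your first half---divergence of $T_{\tilde{\gamma}_K^{2}}$ (and $T_{\tilde{\epsilon}_\alpha}$) under a fixed alternative via Theorem \ref{consistencia} combined with the strict positivity from Theorem \ref{positivo}---is exactly the paper's argument, down to the reduction to a common support and the scaling by $\frac{n_0 n_1}{n_0+n_1}$. Where you genuinely diverge is in how the rejection threshold is controlled. The paper never touches the permutation distribution: it invokes the theory of degenerate $U$-statistics (via the asymptotic-distribution sketch earlier in its appendix) to assert that under the null the scaled statistic has a nondegenerate limit law, hence finite asymptotic critical values $c_{\alpha_1},c_{\alpha_2}$, and concludes since the scaled statistic diverges under the alternative. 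Strictly speaking, that proves consistency of a test calibrated by asymptotic critical values, even though the procedure actually implemented is the permutation test. You instead bound the permutation quantile directly, arguing the permuted statistic is degenerate because both relabeled groups draw from the pooled mixture; this is more faithful to the test as used, and it surfaces explicitly the exchangeability caveat (equal censoring mechanisms) that the paper relegates to its methodology section. Two cautions: your ``i.i.d.\ from the pooled mixture'' step is heuristic, since permuting is sampling without replacement from the pooled data, so closing it rigorously needs the standard Hoeffding/coupling machinery for permutation distributions; and both routes rest on the same unproven ingredient, namely the limit theory of the degenerate two-sample cross term under censoring, which the paper concedes has no available results and simply asserts to coincide with the uncensored case, whereas you flag it and propose a tightness or crude moment bound yielding an $o_P\bigl(\tfrac{n_0 n_1}{n_0+n_1}\bigr)$ quantile as a workaround. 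In that sense your proposal is no less complete than the paper's proof, and your sub-linear-quantile reduction is the cleaner way to finish the argument if such a bound were supplied.
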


\begin{proof}

	We assume it without any restrictions that $P_0$ and $ P_1$ have the same support (otherwise it is enough to extend the probability measure with less support to the higher one). If $\tau_0= \tau_1$, we can  apply Theorems \ref{positivo}-\ref{consistencia} and then we have it guaranteed that:
	
	\begin{gather}\label{cota1}
	\begin{align}
	\lim_{n_0\to \infty,n_1 \to \infty} \tilde{\epsilon}_\alpha(P_0,P_1)= 
	2\frac{\int_{0}^{\tau_0} \int_{0}^{\tau_1}	 ||x-y||^{\alpha} dP_0^{*}(x) dP_1^{*}(y)}{\int_{0}^{\tau_0} \int_{0}^{\tau_1} dP_0^{*}(x) dP_1^{*}(y)}\\
	-\frac{\int_{0}^{\tau_0} \int_{0}^{\tau_0}	 ||x-y||^{\alpha} dP_0^{*}(x) dP_0^{*}(y)}{\int_{0}^{\tau_0} \int_{0}^{\tau_0} dP_0^{*}(x) dP_0^{*}(y)} -\frac{\int_{0}^{\tau_1}\int_{0}^{\tau_1}	 ||x-y||^{\alpha} dP_1^{*}(x) dP_1^{*}(y)}{\int_{0}^{\tau_1}\int_{0}^{\tau_1} dP_1^{*}(x) dP_1^{*}(y)} \geq 0 \nonumber
	\end{align}
	\end{gather}
	%
	%
	
	\begin{align}\label{cota2}
	\lim_{n_0\to \infty,n_1 \to \infty} \tilde{\gamma}_K(P_0,P_1)= \frac{\int_{0}^{\tau_0} \int_{0}^{\tau_0}	 K(x,y) dP_0^{*}(x) dP_0^{*}(y)}{\int_{0}^{\tau_0} \int_{0}^{\tau_0}	  dP_0^{*}(x) dP_0^{*}(y)} \\
	+\frac{\int_{0}^{\tau_1}\int_{0}^{\tau_1}	K(x,y) dP_1^{*}(x) dP_1^{*}(y)}{\int_{0}^{\tau_1}\int_{0}^{\tau_1} dP_1^{*}(x) dP_1^{*}(y)}  -2   \frac{\int_{0}^{\tau_0} \int_{0}^{\tau_1}	 K(x,y) dP_0^{*}(x) dP_1^{*}(y)}{\int_{0}^{\tau_0} \int_{0}^{\tau_1} dP_0^{*}(x) dP_1^{*}(y)} \geq 0  \nonumber.
	\end{align}

		Furthermore \ref{cota1} and \ref{cota2} are equal to zero if and only if  $P_0(t)= P_1(t)$ $\forall t \in [0,\tau_1]$.

	Suppose $\exists t \in [0,\tau_1]$ $P_0 (t) \neq P_1 (t)$, then we have strict	inequality in \ref{cota1} and \ref{cota2}, so  with probability one  $\lim_{n_0\to \infty,n_1 \to \infty} P(\tilde{\epsilon}_{\alpha}(P_0,P_1) = c_{\epsilon_\alpha}>0)=1$ and $\lim_{n_0\to \infty,n_1 \to \infty} P(\tilde{\gamma}_K(P_0,P_1)= c_K>0)=1$.  
	According to the theory of degenerate $U$-statistics \cite{korolyuk2013theory}  under the null hyphotesis, there exist constants $c_{\alpha_1}$ and  $c_{\alpha_2}$  satisfying

	\begin{equation*}
	\lim_{n\to \infty} P(\frac{n_0 n_1}{n_{0}+n_{1}} \hat{\epsilon}_{\alpha}(P_0,P_1) > c_{\alpha_1})= \alpha  \text{ and } \lim_{n\to \infty} P(\frac{n_0n_1}{n_{0}+n_1} \hat{\gamma}_K(P_0,P_1) > c_{\alpha_2})=\alpha.
	\end{equation*} 
	
	Under the alternative hypothesis
	
	\begin{equation*}
	\lim_{n\to \infty} P(\frac{n_0 n_1}{n_{0}+n_{1}} \hat{\epsilon}_{\alpha}(P_0,P_1) > c_{\alpha_1})=1  \text{ and } \lim_{n\to \infty} P(\frac{n_0n_1}{n_{0}+n_1} \hat{\gamma}_K(P_0,P_1) > c_{\alpha_2})=1
	\end{equation*} 
	since $n\hat{\epsilon}_{\alpha}(P_0,P_1) \to \infty$ and $ n\hat{\gamma}_K(P_0,P_1)$ with probabiliy one as $n\to \infty$.

	%

	In the case of $\tau_0 \neq \tau_1$ the  support of the distribution functions $P_0$ and $P_1$ is  contained in the intervals $ [0,\tau_0] $ and $[0,\tau_1]$, and, in this situation, the normalization constants are $1$, and then, the previous argument is  true.
\end{proof}

\section{V-statistics as a distance between samples}
\label{appendix:B}
We will now establish that the statistics defined in (\ref{eqn:estadistico3})--(\ref{eqn:estadistico4})  behave like distances between the elements of the sample $\{(X_{j,i},\delta_{j,i})\}_{j=0,1; i=1,\dots,n_j}$ defined in Section \ref{sec:summary}.

Given two arbitrary samples  $A:=\{(X_{j,i},\delta_{j,i})\}_{j=0; i=1,\dots,n_0}$ and $B:=\{(X_{j,i},\delta_{j,i})\}_{j=1; i=1,\dots,n_1}$, a function $d: (\mathbb{R}^{+}\times \{0,1\})^{n_0}\times (\mathbb{R}^{+}\times \{0,1\})^{n_1} \to \mathbb{R}$ between  $A$ and $B$ is a distance if:

\begin{itemize}
	\item $d(A,B)\geq 0$ and $d(A,B)=0$ iff $A=B$. 
	\item $d(A.B)= d(B,A)$.
	\end{itemize}

Moreover, given an arbitrary sample $C$, it is verified that:

\begin{itemize}
	\item $d(A,B)\leq d(A,C)+d(B,C)$.
\end{itemize}

The population version of energy distance and maximum mean discrepancy with appropiate distances/kernel (for example, with euclidean distance) verify those conditions with any pair of probability measures with finite moments of order $2$.  In parallel, considering the weights

\begin{gather}
\begin{align*}
W^{0}_i=W^{0}_{{i:n_0}}/\sum_{i=1}^{n_0} W^{0}_{{i:n_0}} \hspace{0.2cm} and \hspace{0.2cm} W^{1}_j=W^{1}_{{j:n_1}}/\sum_{i=j}^{n_1} W^{0}_{{j:n_1}} \hspace{0.2cm} (i=1,\dots,n_0) \hspace{0.2cm } (j=1,\dots,n_1),
\end{align*}	
\end{gather}

we have

\begin{gather}
\begin{align*}
 W^{0}_i\geq 0, \hspace{0.2cm} W^{1}_j	\geq0 \hspace{0.2cm} (i=1,\dots,n_0) \hspace{0.2cm } (j=1,\dots,n_1) \hspace{0.2cm} \sum_{i=1}^{n_0} W^{0}_i=1 \hspace{0.2cm} \text{and} \hspace{0.2cm} \sum_{j=1}^{n_1} W^{1}_j=1.
\end{align*}
\end{gather}

Now, we consider the probability measures $P^{*}_{0}$, $P^{*}_{1}$ induced by the probabilities $(W^{0}_1,\dots, W^{0}_{n_0})$, $(W^{1}_1,\dots, W^{1}_{n_1})$  whose values are  $(X_{01},\dots, X_{0n_{0}})$ and $(X_{11},\dots, X_{1n_1})$ respectively. It is trivially verified that the energy distance and the maximum mean discrepancy between $P^{*}_{0}$ and $P^{*}_1$ are well defined. By definition,

\begin{align}\label{energiaapend}
\epsilon_\alpha(P^{*},Q^{*})= 2E||X-Y||^{\alpha}- E||X-X^{'}||^{\alpha}-E||Y-Y^{'}||^{\alpha}
\end{align}

where $X$,$X^{\prime}\sim^{i.i.d. } P^{*}$ and $Y$,$Y^{\prime}\sim^{i.i.d.}Q^{*}$. 

Replacing \ref{energiaapend} with  the  populations defined above  quantities,

\begin{gather}
\begin{align}
\epsilon_{\alpha}(P_0^{*},P_1^{*})= 2  \frac{\sum_{i=1}^{n_0}  \sum_{j=1}^{n_1} W^{0}_{i:n_0} W^{1}_{i:n_1} ||X_{0(i:n_0)}-X_{1(j:n_1)}||^{\alpha}}{\sum_{i=1}^{n_0}  \sum_{j=1}^{n_1} W^{0}_{i:n_0} W^{1}_{j:n_1}} - \frac{\sum_{i=1}^{n_0} \sum_{j\neq i}^{n_0}  W^{0}_{i:n_0} W^{0}_{j:n_0}  ||X_{0(i:n_0)}-X_{0(j:n_0)}||^{\alpha}}{\sum_{i=1}^{n_0} \sum_{j= i}^{n_0} W^{0}_{i:n_0} W^{0}_{j:n_0}} \\
- \frac{\sum_{i=1}^{n_1}  \sum_{i\neq j}^{n_1} W^{1}_{i:n_1} W^{1}_{j:n_1}  ||X_{1(i:n_1)}-X_{1(j:n_1)}||^{\alpha}}{\sum_{i=1}^{n_1}  \sum_{j= i}^{n_1} W^{1}_{i:n_1} W^{1}_{j:n_1}} \nonumber.
\end{align}
\end{gather}

 Therefore equations \ref{eqn:estadistico3} and \ref{eqn:estadistico4}  always take values greater than or
equal to zero. This is given if and only if $(W^{0}_1,\dots, W^{0}_{n_0})=(W^{1}_1,\dots, W^{1}_{n_1})$ and $(X_{01},\dots, X_{0n_{0}})=(X_{11},\dots, X_{1n_1})$. This also implies  $(\delta_{01},\dots, \delta_{0n_{0}})=(\delta_{11},\dots, \delta_{1n_1})$.

Note that it is well known that the $U$-statistics  does not verify that property in the general case.  The same is true in the case of censorship present.

\section{Construction of the statistics and mathematical meaning of limits}
\label{appendix:C}

Let $\tilde{\epsilon}_{\alpha}(P_0,P_1)$ be the statistic of the $\alpha$ energy distance   without normalizing (see expression (\ref{sin normalizarr})).  It can be proved by the following reasoning, similar to the appendix \ref{appendix:proof}:  

\begin{gather*}\label{limitesinnormalizar}
\begin{align}
\tilde{\epsilon}_{\alpha}(P_0,P_1)\overset{n_0,n_1\to \infty}{\to} \epsilon_{c(\alpha)}(P_0,P_1)= 2\int_{0}^{\tau_0} \int_{0}^{\tau_1}	 ||x-y||^{\alpha} dP_0(x) dP_1(y)\\
-\int_{0}^{\tau_0} \int_{0}^{\tau_0}	 ||x-y||^{\alpha} dP_0(x) dP_0(y) -\int_{0}^{\tau_1}\int_{0}^{\tau_1}	 ||x-y||^{\alpha} dP_1(x) dP_1(y). 
\end{align}
\end{gather*}

Now, we consider $P_0$ to be the distribution function of a random variable $N(100000,1)$ and $P_1$ to be  a $Uniform(0,1)$ and $\tau_0=\tau_1=0.1$. Then,  $\epsilon_{c(\alpha)}(P_0,P_1)<0$. Next, we  define a function  $f:\mathbb{R}^{+}\to \mathbb{R}$ that evaluates $\epsilon_{c(\alpha)}(P_0,P_1)$ where $P_1$ is defined as before and $P_0$ is the distribution function of a random variable $N(\mu,1)$,  $\mu$ being the parameter of the function. Using the Bolzano Theorem, we can see that there exists $\mu^{*}\in \mathbb{R}^{+}$ such that $f(\mu^{*})=0$. However, (\ref{sin normalizarr}) does not define a consistent test against all alternatives because $\exists t\in [0,0.1]$ such that $P_0(t)\neq P_1(t)$. In fact, it can be proved in this example that it may have only one point where both distribution functions coincide.

The foregoing shows that for  energy distance to be positive, it must be evaluated on a measure of probability in its complete range. This naturally leads to the standardization of statistics (\ref{eqn:estadistico1}--\ref{eqn:estadistico4}) as in the case of censored data. Another condition for   energy distance to behave as a true distance between probability distributions is that the semimetric is of a negative type. Every metric defined in a Hilbert space verifies that condition and therefore the usual Euclidean distance guarantees that property. \cite{lyons2013distance} present a deeper discussion about the related aspects.

In its abstract version, the energy distance between two distribution functions $P_0$, $P_1$ does not have an interpretable explicit expression for distribution functions and characteristic functions of the random variables  involved. \cite{lyons2013distance} precisely extended the energy distance to metric spaces without using Fourier analysis. However, in the case of using the Euclidean metric or  an invariant kernel such as the Gaussian kernel, we can give an explicit expression at the population level. Below, we provide  concrete expressions for energy distance with euclidean distance and maximun mean discrepancy with the Gaussian kernel:

\begin{gather}
\begin{align*}
\epsilon_{c_{(1)}}(P_0,P_1)=  2\int_{0}^{\infty} (P^{\prime}_0(t)-P^{\prime}_1(t))^{2} dt= \\
	2\frac{\int_{0}^{\tau_0} \int_{0}^{\tau_1} ||x-y|| dP^{0}(x) dP^{1}(y)}{\int_{0}^{\tau_0}dP^{0}(x)\int_{0}^{\tau_1}dP^{1}(x)}-\frac{\int_{0}^{\tau_0} \int_{0}^{\tau_0} ||x-y|| dP^{0}(x) dP^{0}(y)}{\int_{0}^{\tau_0}dP^{0}(x)\int_{0}^{\tau_0}dP^{0}(x)}-\frac{\int_{0}^{\tau_1} \int_{0}^{\tau_1} ||x-y|| dP^{1}(x) dP^{1}(y)}{\int_{0}^{\tau_1}dP^{1}(x)\int_{0}^{\tau_1}dP^{1}(x)}.
\end{align*}
\end{gather}

\begin{gather}
\begin{align*}
\epsilon_c(\alpha)(P_0,P_1)=  \frac{1}{\pi}\int_{0}^{\infty} \frac{|\hat{f_0}(t)-\hat{f}_1(t)|^2}{|t|^{2}} dt
\end{align*}
\end{gather}

where $\hat{f}_0$ is the characteristic function of $P_0^{\prime}$ and 
$\hat{f}_1$ is the characteristic function of $P_1^{\prime}$.

Finally, given a Gaussian kernel $K$ or any translation invariant kernels

\begin{gather}
\begin{align*}
\gamma_{c(K)}^{2}(P_0,P_1) = \int_{0}^{t} |\hat{f_0}(t)-\hat{f}_1(t)|^2 d\Lambda(t)
\end{align*}  
\end{gather}

where $\Lambda(\cdot)$ is a finite non-negative Borel measure.

\section{Statistics in multivariate case}

\label{appendix:D}



Let us now consider the construction of the statistics of energy distance and maximum mean discrepancy in the multivariate case. In this case, there is a  lifetime $T\in \mathbb{R}^{+}$ with possible censorship  and a vector of covariates $S\in \mathbb{R}^{p-1}$ without censorship. Possible practical applications of the above include the comparison of the equality of distribution according to the lifetime of individuals and certain clinical variables of patients, independence testing \cite{shen2019exact},  or change-point detection problems.

Let $H_{j,i}=(T_{j,i},S_{j,i}) \sim P_j$ $(j=0,1; i=1,\dots,n_j)$ and censoring times $C_{j,i} \sim$ $Q_j$ $(j=0,1; i=1,\dots,n_j)$, with distribution   $P_j$  defined as a subset of $\mathbb{R^{+}}\times \mathbb{R}^{p-1}$ and the distributions  $Q_j$ on $\mathbb{R}^{+}$ $(j=0,1)$. Here,  the index   $j$  represents  a population, and the index  $i$  a particular  sample within the corresponding  population. Moreover, the random variables  $(T_{0,1},S_{0,1}),\dots,(T_{0,n_0},S_{0,n_0}),\dots,(T_{1,1},S_{1,1}),$ $\dots,(T_{1,n_1},S_{1,n_1}),$ $ C_{0,1},\dots,C_{n_0,n_0},C_{1,n_1},\dots,C_{n_1,n_1}$ are assumed to be independent of each other. In practice, only the random variables   $(X_{j,i}= min(T_{j,i},C_{j,i}),S_{j,i})$  and $\delta_{j,i}= 1\{X_{j,i} = T_{j,i}\}$	$(j=0,1; i=1,\dots,n_j)$ are observed. 

On the basis  of the  observed  data $\{(X_{j,i},S_{j,i},\delta_{j,i})\}_{j=0,1; i=1,\dots,n_j}$ we must approximate the distances $\epsilon_\alpha(P_0,P_1)$, ${\gamma}_K^{2}(P_0,P_1)$. In this case, we can use the Kaplan-Meier estimator in the presence of covariates \cite{stute1993consistent,gerds2017kaplan}.

\begin{gather*}
\begin{align}
\hat{\epsilon}_{\alpha}(P_0,P_1)= 2  \frac{\sum_{i=1}^{n_0}  \sum_{j=1}^{n_1} W^{0}_{i:n_0} W^{1}_{i:n_1} ||H_{0,(i:n_0)}-H_{1,(j:n_1)}||^{\alpha}}{\sum_{i=1}^{n_0}  \sum_{j=1}^{n_1} W^{0}_{i:n_0} W^{1}_{j:n_1}} - \frac{\sum_{i=1}^{n_0} \sum_{j\neq i}^{n_0}  W^{0}_{i:n_0} W^{0}_{j:n_0}  ||H_{0,(i:n_0)}-H_{0,(j:n_0)}||^{\alpha}}{\sum_{i=1}^{n_0} \sum_{j\neq i}^{n_0} W^{0}_{i:n_0} W^{0}_{j:n_0}} \\
- \frac{\sum_{i=1}^{n_1}  \sum_{i\neq j}^{n_1} W^{1}_{i:n_1} W^{1}_{j:n_1}  ||H_{1,(i:n_1)}-H_{1,(j:n_1)}||^{\alpha}}{\sum_{i=1}^{n_1}  \sum_{j\neq i}^{n_1} W^{1}_{i:n_1} W^{1}_{j:n_1}} \nonumber
\end{align}
\end{gather*}
\textbf{($U$-statistic $\alpha$-energy distance under right censoring)},

\begin{gather}
\begin{align*}
\hat{\gamma}_K^{2}(P_0,P_1)=  \frac{\sum_{i=1}^{n_0}  \sum_{j\neq i}^{n_0} W^{0}_{i:n_0} W^{0}_{j:n_0} K(H_{0,(i:n_0)},H_{0,(j:n_0)})}{ \sum_{j=1}^{n_0} \sum_{j\neq i}^{n_0} W^{0}_{i:n_0} W^{0}_{j:n_0}}+ \frac{\sum_{i=1}^{n_1}  \sum_{j\neq i}^{n_1} W^{1}_{i:n_1} W^{1}_{j:n_1}  K(H_{1,(i:n_1)},H_{1,(j:n_1)})}{\sum_{i=1}^{n_1}  \sum_{j\neq i }^{n_1}W^{1}_{i:n_1} W^{1}_{j:n_1}}\\-2\frac{\sum_{i=1}^{n_0}  \sum_{j=1}^{n_1} W^{0}_{i:n_0} W^{1}_{i:n_1} K(H_{0,(i:n_0)},H_{1,(j:n_1)})}{\sum_{i=1}^{n_0}  \sum_{j=1}^{n_1} W^{0}_{i:n_0} W^{1}_{i:n_1}} \nonumber
\end{align*}
\end{gather}

\textbf{($U$-statistic kernel method under right censoring)}.

Analogously, we can define $V$-statistics as follows:

\begin{gather}
\begin{align*}
\hat{\epsilon}_{\alpha}(P_0,P_1)= 2  \frac{\sum_{i=1}^{n_0}  \sum_{j=1}^{n_1} W^{0}_{i:n_0} W^{1}_{i:n_1} ||H_{0,(i:n_0)}-H_{1,(j:n_1)}||^{\alpha}}{\sum_{i=1}^{n_0}  \sum_{j=1}^{n_1} W^{0}_{i:n_0} W^{1}_{i:n_1}} - \frac{\sum_{i=1}^{n_0} \sum_{j=1 }^{n_0}  W^{0}_{i:n_0} W^{0}_{j:n_0}  ||H_{0,(i:n_0)}-H_{0,(j:n_0)}||^{\alpha}}{\sum_{i=1}^{n_0} \sum_{j= 1}^{n_0} W^{0}_{i:n_0} W^{0}_{j:n_0}} \\
- \frac{\sum_{i=1}^{n_1}  \sum_{j=1}^{n_1} W^{1}_{i:n_1} W^{1}_{j:n_1}  ||H_{1,(i:n_1)}-H_{1,(j:n_1)}||^{\alpha}}{\sum_{i=1}^{n_1}  \sum_{j=1}^{n_1} W^{1}_{i:n_1} W^{1}_{j:n_1}} \nonumber
\end{align*}
\end{gather}

\textbf{($V$-statistic $\alpha$-energy distance under right censoring)},
\begin{gather}
\begin{align*}
\hat{\gamma}_K^{2}(P_0,P_1)=  \frac{\sum_{i=1}^{n_0}  \sum_{j=1}^{n_0} W^{0}_{i:n_0} W^{0}_{j:n_0} K(H_{0,(i:n_0)},H_{0,(j:n_0)})}{ \sum_{j=1}^{n_0} \sum_{j=1}^{n_0} W^{0}_{i:n_0} W^{0}_{j:n_0}}+ \frac{\sum_{i=1}^{n_1}  \sum_{j=1}^{n_1} W^{1}_{i:n_1} W^{1}_{j:n_1}  K(H_{1,(i:n_1)},H_{1,(j:n_1)})}{\sum_{i=1}^{n_1}  \sum_{j=1 }^{n_1}W^{1}_{i:n_1} W^{1}_{j:n_1}}\\-2\frac{\sum_{i=1}^{n_0}  \sum_{j=1}^{n_1} W^{0}_{i:n_0} W^{1}_{i:n_1} K(H_{0,(i:n_0)},H_{1,(j:n_1)})}{\sum_{i=1}^{n_0}  \sum_{j=1}^{n_1} W^{0}_{i:n_0} W^{1}_{i:n_1}} \nonumber
\end{align*}
\end{gather}

\textbf{($V$-statistic kernel method under right censoring)},
where
\begin{equation}
W^{0}_{{i:n_0}}= \frac{\delta_{0,(i:n_0)}}{n_0-i+1}\prod_{j=1}^{i-1}[\frac{n_0-j}{n_0-j+1}]^{\delta_{0,(j:n_0)}}
\hspace{0.2cm}   (i=1,\dots,n_0)
\end{equation}
and
\begin{equation}
W^{1}_{{i:n_1}}= \frac{\delta_{1,(i:n_1)}}{n_1-i+1}\prod_{j=1}^{i-1}[\frac{n_1-j}{n_1-j+1}]^{\delta_{1,(j:n_1)}} \hspace{0.2cm}   (i=1,\dots,n_1).
\end{equation}

It can be seen that this estimator is asymptotically efficient with the hypothesis of independence assumed between lifetimes and censorship times \cite{gerds2017kaplan}.  However, this situation is unrealistic in practice.  	Instead, $T$ and $C$ are often imposed to be  conditionally independent given   $S$, see \cite{fan1994censored}. 

Given the equivalence between the weights of the Kaplan-Meier estimator and the inverse-probability-of-censoring weighted average 
\cite{satten2001kaplan}, a natural generalization for modeling dependent censorship is to calculate weights as follows:

\begin{equation}
W^{1}_{{i:n_0}}= \frac{\delta_{0(i:n_1)}}{n_0\hat{P}(C_0>X_{0,(i:n_0)}|S=S_{0,(i:n_0)})} \hspace{0.2cm}   (i=1,\dots,n_0),
\end{equation}
and
\begin{equation}
W^{1}_{{i:n_1}}= \frac{\delta_{1(i:n_1)}}{n_1 \hat{P}(C_1>X_{1,(i:n_1)}|S=S_{1,(i:n_i)})} \hspace{0.2cm}   (i=1,\dots,n_1).
\end{equation}.

The previous conditional probability of the censorship variable of each population can be estimated, for example, using the Cox model \cite{gerds2017kaplan}. In   a one- or two-dimensional space, an alternative option is to use a non-parametric approach with the Beran estimator (the smoothed conditional Kaplan-Meier estimator) \cite{gonzalez1994asymptotic}. From the theoretical point of view, in the case of dependent censorship, the estimators with inverse-probability-of-censoring weighted average   have the disadvantage that they are  not  asymptotically efficient  \cite{van2002locally}. A doubly robust strategy \cite{tsiatis2007semiparametric,rubin2007doubly} could solve this problem, however this is an open problem for high-dimensional data.

\section{Null hypothesis results}
\label{appendix:E}

\begin{table}[H]

	\caption{	\label{append:tab:tabla3} Proportion $p$-values less or equal $0.05$ Exponencial distribution under the null hypothesis}
	\scalebox{0.8}{

		\begin{tabular}{ccccccccccc}			\hline
			Method: & &	& &	 Logrank & Gehan & Tarone & Peto & Flemming \\
			& &	&  & & & & & $\rho=1,\gamma=1$  \\

			Comparative	& $n_1$ & $n_2$ & Censoring rate    & $\hat{p}$  & $\hat{p}$ & $\hat{p}$ & $\hat{p}$ & $\hat{p}$	
			
			\\

			\hline
			Exp(1) & 20 & 20 & 0.1  & 0.050 & 0.046 & 0.046 & 0.044 & 0.058 \\ 
			Exp(1) & 50 & 50 & 0.1  & 0.060 & 0.066 & 0.064 & 0.064 & 0.056 \\ 
			Exp(1.5) & 20 & 20  & 0.1 & 0.066 & 0.058 & 0.062 & 0.058 & 0.062 \\ 
			Exp(1.5) & 50 & 50 & 0.1  & 0.062 & 0.056 & 0.050 & 0.056 & 0.054 \\ 
			Exp(1) & 20 & 20 & 0.3   & 0.058 & 0.054 & 0.056 & 0.056 & 0.056 \\ 
			Exp(1) & 50 & 50 &  0.3  & 0.050 & 0.052 & 0.054 & 0.058 & 0.046 \\ 
			Exp(1.5) & 20 & 20 & 0.3  & 0.054 & 0.054 & 0.050 & 0.052 & 0.052 \\ 
			Exp(1.5) & 50 & 50 & 0.3 &  0.066 & 0.068 & 0.066 & 0.068 & 0.056 \\
			\hline
		\end{tabular}} 
		
		\scalebox{0.8}{
			\begin{tabular}{ccccccccc}			\hline
				Method: & &	& &	 Energy distance & Kernel & Kernel  \\
				& &	& &	 $\alpha=1$  & Gaussian $\sigma=1$  & Laplacian $\sigma=1$   \\
				
				Comparative	& $n_1$ & $n_2$ & Censoring rate   &  $\hat{p}$ & $\hat{p}$ & $\hat{p}$ 
				
				\\

				\hline
				Exp(1) & 20 & 20 & 0.1 & 0.048 &  0.052 & 0.046 &  \\ 
				Exp(1) & 50 & 50 & 0.1 & 0.056  & 0.052 & 0.056   \\ 
				Exp(1.5) & 20 & 20  & 0.1 &  0.066  & 0.072 & 0.058  \\ 
				Exp(1.5) & 50 & 50 & 0.1 & 0.042  & 0.048 & 0.044 \\ 
				Exp(1) & 20 & 20 & 0.3 & 0.058  & 0.050 & 0.042  \\ 
				Exp(1) & 50 & 50 &  0.3 & 0.056 & 0.054 & 0.050  \\ 
				Exp(1.5) & 20 & 20 & 0.3 & 0.058 & 0.052 & 0.052   \\ 
				Exp(1.5) & 50 & 50 & 0.3 & 0.064 & 0.064 & 0.044  \\
				\hline
			\end{tabular}} 
			\label{fig:n11}	
		\end{table}

		\begin{table}[H]

			\caption{				\label{append:tab:tabla4}  Proportion $p$-values less or equal $0.05$ Gamma distribution nunder the null hypothesis}
			\centering
			\scalebox{0.8}{
				\begin{tabular}{ccccccccccc}			\hline
					Method: & &	& &	  Logrank & Gehan & Tarone & Peto & Flemming \\
					& &	& &	   & & & & $\rho=1,\gamma=1$  \\

					Comparative	& $n_1$ & $n_2$ & Censoring rate   & $\hat{p}$ &  $\hat{p}$ & $\hat{p}$ & $\hat{p}$ & $\hat{p}$	
					
					\\

					\hline 
					Gamma(1,1) & 20 & 20 & 0.3  & 0.054 & 0.054 & 0.056 & 0.058 & 0.054 \\ 
					Gamma(1,1) & 50 & 50 & 0.1 & 0.038 & 0.038 & 0.030 & 0.032 & 0.050 \\ 
					Gamma(1.5,1.5) & 20 & 20 & 0.1   & 0.046 & 0.048 & 0.048 & 0.048 & 0.062 \\ 
					Gamma(1.5,1.5) & 50 & 50 & 0.1  & 0.046 & 0.044 & 0.046 & 0.044 & 0.050 \\ 
					Gamma(1,1) & 20 & 20 & 0.3 &  0.056 & 0.060 & 0.058 & 0.052 & 0.066 \\ 
					Gamma(1,1) & 50 & 50 & 0.3 & 0.054 & 0.052 & 0.058 & 0.050 & 0.046 \\ 
					Gamma(1.5,1.5) & 20 & 20 & 0.3  & 0.058 & 0.060 & 0.064 & 0.062 & 0.066 \\ 
					Gamma(1.5,1.5) & 50 & 50 & 0.3  & 0.050 & 0.062 & 0.060 & 0.062 & 0.050 \\ 
					
					\hline
				\end{tabular}} 
				
				\scalebox{0.8}{
					\begin{tabular}{ccccccccc}			\hline
						Method: & &	& &	 Energy distance  & Kernel & Kernel  \\
						& &	& &	 $\alpha=1$  & Gaussian $\sigma=1$  & $Laplacian$ $\sigma=1$    \\

						Comparative	& $n_1$ & $n_2$ & Censoring rate   & $\hat{p}$ & $\hat{p}$ & $\hat{p}$ & 	
						
						\\

						\hline 
						Gamma(1,1) & 20 & 20 & 0.3 & 0.058  & 0.052 & 0.060   \\ 
						Gamma(1,1) & 50 & 50 & 0.1 & 0.044 & 0.042 & 0.040   \\ 
						Gamma(1.5,1.5) & 20 & 20 & 0.1  & 0.062  & 0.060 & 0.060  \\ 
						Gamma(1.5,1.5) & 50 & 50 & 0.1 & 0.050 & 0.054 & 0.052  \\ 
						Gamma(1,1) & 20 & 20 & 0.3 & 0.058  & 0.064  & 0.062  \\ 
						Gamma(1,1) & 50 & 50 & 0.3 & 0.058  & 0.056 & 0.062  \\ 
						Gamma(1.5,1.5) & 20 & 20 & 0.3  & 0.068 & 0.070 & 0.056 \\ 
						Gamma(1.5,1.5) & 50 & 50 & 0.3 &  0.056 & 0.056 & 0.068  \\ 
						\hline
						
					\end{tabular}} 
					\label{fig:n11}	
					
				\end{table}

				\begin{table}[H]

						\caption{						\label{append:tab:tabla5} Proportion $p$-values less or equal $0.05$ Lognormal distribution  under the null hypothesis}
					\centering
					\scalebox{0.8}{
						\begin{tabular}{ccccccccccc}			\hline
							Method: & &	& &	  Logrank & Gehan & Tarone & Peto & Flemming \\
							& &	& &	   & & & & $\rho=1,\gamma=1$  \\

							Comparative	& $n_1$ & $n_2$ & Censoring rate   & $\hat{p}$ &  $\hat{p}$ & $\hat{p}$ & $\hat{p}$ & $\hat{p}$	
							\\ 
							
							\hline 
							Lognormal(0,0.5) & 20 & 20 & 0.1  & 0.052 & 0.044 & 0.044 & 0.042 & 0.048 \\ 
							Lognormal(0,0.5) & 50 & 50 & 0.1   & 0.040 & 0.034 & 0.040 & 0.036 & 0.040 \\ 
							Lognormal(0,0.25) & 20 & 20 & 0.1  & 0.062 & 0.078 & 0.076 & 0.080 & 0.054 \\ 
							Lognormal(0,0.25) & 50 & 50 & 0.1   & 0.036 & 0.044 & 0.044 & 0.040 & 0.038 \\ 
							Lognormal(0,0.5) & 20 & 20 & 0.3  & 0.042 & 0.052 & 0.040& 0.048 & 0.050 \\ 
							Lognormal(0,0.5) & 50 & 50  & 0.3  & 0.078 & 0.082 & 0.078 & 0.082 & 0.066 \\ 
							Lognormal(0,0.25) & 20 & 20 & 0.3 & 0.050 & 0.056 & 0.058 & 0.054 & 0.042 \\ 
							Lognormal(0,0.25) & 50 & 50 & 0.3  & 0.046 & 0.060 & 0.046 & 0.058 & 0.048 \\ 
							\hline
						\end{tabular}}

						\scalebox{0.8}{
							\begin{tabular}{ccccccccc}			\hline
								Method: & &	& &	 Energy distance  & Kernel & Kernel  \\
								& &	& &	 $\alpha=1$  & Gaussian $\sigma=1$  & $Laplacian$ $\sigma=1$    \\

								Comparative	& $n_1$ & $n_2$ & Censoring rate   & $\hat{p}$ & $\hat{p}$ & $\hat{p}$ & 	
								
								\\

								\hline 
								
								Lognormal(0,0.5) & 20 & 20 & 0.1 & 0.050 & 0.052 & 0.054  \\ 
								Lognormal(0,0.5) & 50 & 50 & 0.1 & 0.040  & 0.038 & 0.040   \\ 
								Lognormal(0,0.25) & 20 & 20 & 0.1 & 0.084  & 0.076 & 0.080  \\ 
								Lognormal(0,0.25) & 50 & 50 & 0.1 & 0.038 & 0.040 & 0.044   \\ 
								Lognormal(0,0.5) & 20 & 20 & 0.3 & 0.046  & 0.050 & 0.046  \\ 
								Lognormal(0,0.5) & 50 & 50 & 0.3 & 0.072 & 0.074 & 0.074   \\ 
								Lognormal(0,0.25) & 20 & 20 & 0.3 & 0.056 & 0.060 & 0.054   \\ 
								Lognormal(0,0.25) & 50 & 50 & 0.3 & 0.044 & 0.040 & 0.052  \\ 
								\hline
							\end{tabular}}

							\label{fig:n11}	
							
						\end{table}

\begin{table}[H]
	\caption{	\label{append:tab:tabla6} Empirical mean and standart deviation of $p-values$ Exponential distribution under the null hyphotesis. }

	\scalebox{0.8}{
		\begin{tabular}{ccccccccccc}			
			\hline
			Method: & &	& &	 Logrank & Gehan & Tarone & Peto & Flemming \\
			& &	& &	  & & & & $\rho=1,\gamma=1$  \\
			Comparative	& $n_1$ & $n_2$ & Censoring rate   & $\overline{x}\Mypm \sigma$ & $\overline{x}\Mypm \sigma$ & $\overline{x}\Mypm \sigma$ & $\overline{x}\Mypm \sigma$ &  $\overline{x}\Mypm \sigma$	
			\\
			\hline	
			Exp(1)	& 50 & 50 & 0.1  & 0.492 $\Mypm$   0.293 & 0.489 $\Mypm$   0.295 & 0.478 $\Mypm$   0.280 & 0.486 $\Mypm$   0.292 & 0.490 $\Mypm$   0.295 \\ 
			Exp(1.5)	& 20 & 20	 & 0.1 & 0.486 $\Mypm$   0.289 & 0.475 $\Mypm$   0.291 & 0.481 $\Mypm$   0.288 & 0.458 $\Mypm$   0.288 \\ 
			Exp(1.5)	& 50 & 50	 & 0.1 & 0.492 $\Mypm$   0.299 & 0.501 $\Mypm$   0.295 & 0.492 $\Mypm$   0.295 & 0.498 $\Mypm$   0.295 & 0.479 $\Mypm$   0.293 \\ 
			Exp(1)	& 20 & 20	 & 0.3  & 0.495 $\Mypm$   0.284 & 0.502 $\Mypm$   0.297 & 0.500 $\Mypm$   0.294 & 0.499 $\Mypm$   0.295 & 0.507 $\Mypm$   0.286 \\ 
			Exp(1)	& 50 & 50	 & 0.3  & 0.503 $\Mypm$   0.296 & 0.486 $\Mypm$   0.297 & 0.491 $\Mypm$   0.296 & 0.486 $\Mypm$   0.297 & 0.502 $\Mypm$   0.287 \\ 
			Exp(1.5)	& 20 & 20	 & 0.3   & 0.497 $\Mypm$   0.295 & 0.499 $\Mypm$   0.289 & 0.493 $\Mypm$   0.285 & 0.495 $\Mypm$   0.286 & 0.501 $\Mypm$   0.292 \\ 
			Exp(1.5)	& 50 & 50	 & 0.3 & 0.496 $\Mypm$   0.298 & 0.492 $\Mypm$   0.294 & 0.495 $\Mypm$   0.299 & 0.492 $\Mypm$   0.294 & 0.500 $\Mypm$   0.299 \\ 
			\hline
		\end{tabular}}

		\scalebox{0.8}{

			\begin{tabular}{ccccccccc}
				\hline
				Method: & &	& &	 Energy distance   & Kernel & Kernel  \\
				& &	& &	 $\alpha=1$ & Gaussian $\sigma=1$  & $Laplacian$ $\sigma=1$    \\
				Comparative	& $n_1$ & $n_2$ & Censoring rate   & $\overline{x}\Mypm \sigma$ & $\overline{x}\Mypm \sigma$ &  $\overline{x}\Mypm \sigma$
				\\						
				\hline	
				Exp(1)	& 50 & 50 & 0.1 & 0.482 $\Mypm$   0.293 &  0.481 $\Mypm$   0.298 & 0.478 $\Mypm$   0.294   \\ 
				Exp(1.5)	& 20 & 20	 & 0.1 & 0.482 $\Mypm$   0.287  & 0.490 $\Mypm$   0.285 & 0.493 $\Mypm$   0.288   \\ 
				Exp(1.5)	& 50 & 50	 & 0.1 & 0.482 $\Mypm$   0.295  & 0.485 $\Mypm$   0.293 & 0.481 $\Mypm$   0.289  \\ 
				Exp(1)	& 20 & 20	 & 0.3 & 0.508 $\Mypm$   0.288  & 0.503 $\Mypm$   0.285 & 0.506 $\Mypm$   0.287  \\ 
				Exp(1)	& 50 & 50	 & 0.3 & 0.494 $\Mypm$   0.297  & 0.493 $\Mypm$   0.297 & 0.495 $\Mypm$   0.297   \\ 
				Exp(1.5)	& 20 & 20	 & 0.3 & 0.500 $\Mypm$   0.290  & 0.492 $\Mypm$   0.284 & 0.506 $\Mypm$   0.292 \\ 
				Exp(1.5)	& 50 & 50	 & 0.3 & 0.489 $\Mypm$   0.301 & 0.489 $\Mypm$   0.301  & 0.490 $\Mypm$   0.302 \\ 
				\hline	
			\end{tabular}}
		\end{table}

		\begin{table}[H]

			\caption{\label{append:tab:tabla7} Empirical mean and standart deviation of $p-values$ Gamma distribution under the null hyphotesis.}
			
			\scalebox{0.8}{
				\begin{tabular}{ccccccccccc}			\hline
					Method: & &	& &	 Logrank & Gehan & Tarone & Peto & Flemming \\
					& &	& &	  & & & & $\rho=1,\gamma=1$  \\
					Comparative	& $n_1$ & $n_2$ & Censoring rate   & $\overline{x}\Mypm \sigma$ & $\overline{x}\Mypm \sigma$ & $\overline{x}\Mypm \sigma$ & $\overline{x}\Mypm \sigma$ &  $\overline{x}\Mypm \sigma$
					
					\\
					
					\hline	
					
					Gamma(1,1)	& 20  & 20 & 0.1  & 0.491 $\Mypm$   0.284 & 0.510 $\Mypm$   0.294 & 0.498 $\Mypm$   0.288 & 0.506 $\Mypm$   0.292 & 0.493 $\Mypm$   0.282 \\ 
					Gamma(1,1)	& 50  & 50 & 0.1&  0.505 $\Mypm$   0.292 & 0.508 $\Mypm$   0.287 & 0.505 $\Mypm$   0.290 & 0.508 $\Mypm$   0.288 & 0.502 $\Mypm$   0.290 \\ 
					Gamma(1.5,1.5)	& 20  & 20 & 0.1 &  0.515 $\Mypm$   0.299 &  $\Mypm$   0.299 & 0.520 $\Mypm$   0.290 & 0.519 $\Mypm$   0.289 & 0.522 $\Mypm$   0.291   \\ 
					Gamma(1.5,1.5)	& 50  & 50 & 0.1&  0.493 $\Mypm$   0.291 &  0.505 $\Mypm$   0.289 & 0.509 $\Mypm$   0.289 & 0.506 $\Mypm$   0.288 & 0.505 $\Mypm$   0.291 \\ 
					Gamma(1,1)	& 20  & 20 & 0.3 &  0.484 $\Mypm$   0.288 & 0.475 $\Mypm$   0.289 & 0.477 $\Mypm$   0.297 & 0.467 $\Mypm$   0.288 & 0.464 $\Mypm$   0.288 &  \\ 
					Gamma(1,1)	& 50  & 50 & 0.3 &  0.485 $\Mypm$   0.292 & 0.513 $\Mypm$   0.300 & 0.498 $\Mypm$   0.293 & 0.511 $\Mypm$   0.300 & 0.474 $\Mypm$   0.287 \\  
					Gamma(1.5,1.5)	& 20  & 20 & 0.3 & 0.484 $\Mypm$   0.297 & 0.499 $\Mypm$   0.294 & 0.490 $\Mypm$   0.295 & 0.494 $\Mypm$   0.292 & 0.484 $\Mypm$   0.294 \\ 
					Gamma(1.5,1.5)	& 50  & 50 & 0.3 & 0.509 $\Mypm$   0.289 & 0.490 $\Mypm$  0.291 & 0.493 $\Mypm$   0.288 & 0.489 $\Mypm$   0.292 & 0.514 $\Mypm$   0.288 \\ 
					
					\hline

				\end{tabular}}

				\scalebox{0.8}{
					\begin{tabular}{ccccccccc}			\hline
						Method: & &	& &	 Energy distance   & Kernel & Kernel  \\
						& &	& &	 $\alpha=1$ & Gaussian $\sigma=1$  & $Laplacian$ $\sigma=1$    \\
						Comparative	& $n_1$ & $n_2$ & Censoring rate   & $\overline{x}\Mypm \sigma$ & $\overline{x}\Mypm \sigma$ &  $\overline{x}\Mypm \sigma$
						\\						
						\hline
						Gamma(1,1)	& 20  & 20 & 0.1 & 0.501 $\Mypm$   0.294  & 0.512 $\Mypm$   0.297 & 0.508 $\Mypm$   0.296   \\ 
						Gamma(1,1)	& 50  & 50 & 0.1& 0.503 $\Mypm$   0.291 & 0.512 $\Mypm$   0.292 & 0.508 $\Mypm$   0.288  \\ 
						Gamma(1.5,1.5)	& 20  & 20 & 0.1 & 0.519 $\Mypm$   0.295 & 0.515 $\Mypm$   0.301 & 0.516 $\Mypm$   0.295   \\ 
						Gamma(1.5,1.5)	& 50  & 50 & 0.1&  0.499 $\Mypm$   0.290 & 0.493 $\Mypm$   0.291  & 0.495 $\Mypm$   0.292  \\ 
						Gamma(1,1)	& 20  & 20 & 0.3  & 0.477 $\Mypm$   0.288 & 0.479 $\Mypm$   0.289 & 0.484 $\Mypm$   0.287 \\ 
						Gamma(1,1)	& 50  & 50 & 0.3	 & 0.489 $\Mypm$   0.293 & 0.497 $\Mypm$   0.296 & 0.497 $\Mypm$   0.293  \\  
						Gamma(1.5,1.5)	& 20  & 20 & 0.3 & 0.491 $\Mypm$   0.293 & 0.493 $\Mypm$   0.294 & 0.494 $\Mypm$   0.294  \\ 
						Gamma(1.5,1.5)	& 50  & 50 & 0.3	& 0.495 $\Mypm$   0.295  & 0.492 $\Mypm$   0.293 & 0.490 $\Mypm$   0.295   \\ 
						\hline
					\end{tabular}}

				\end{table}

				\begin{table}[H]

					\caption{					\label{append:tab:tabla8} Empirical mean and standart deviation of $p-values$ Lognormal distribution under the null hyphotesis.}

					\scalebox{0.8}{
						\begin{tabular}{ccccccccccc}			\hline
							Method: & &	& &	 Logrank & Gehan & Tarone & Peto & Flemming \\
							& &	& &	  & & & & $\rho=1,\gamma=1$  \\
							Comparative	& $n_1$ & $n_2$ & Censoring rate   & $\overline{x}\Mypm \sigma$ & $\overline{x}\Mypm \sigma$ & $\overline{x}\Mypm \sigma$ & $\overline{x}\Mypm \sigma$ &  $\overline{x}\Mypm \sigma$
							
							\\
							
							\hline	 
							Lognormal(0,0.5) & 20 &	20 & 0.1  & 0.472 $\Mypm$   0.279 & 0.477 $\Mypm$   0.287 & 0.470 $\Mypm$   0.285 & 0.473 $\Mypm$   0.286 & 0.483 $\Mypm$   0.287 \\ 
							Lognormal(0,0.5) & 50 &	50 & 0.10  & 0.508 $\Mypm$   0.283 & 0.504 $\Mypm$   0.279 & 0.508 $\Mypm$   0.286 & 0.504 $\Mypm$   0.281 & 0.515 $\Mypm$   0.296 \\ 
							Lognormal(0,0.25) & 20 & 20 & 0.1  & 0.484 $\Mypm$   0.291 & 0.476 $\Mypm$  0.295 & 0.473 $\Mypm$   0.291 & 0.471 $\Mypm$   0.293 & 0.487 $\Mypm$   0.290 \\ 
							Lognormal(0,0.25) & 50 & 50 & 0.1 & 0.517 $\Mypm$   0.292 & 0.523 $\Mypm$  0.291 & 0.522 $\Mypm$   0.293 & 0.522 $\Mypm$   0.291 & 0.506 $\Mypm$   0.284 \\ 
							Lognormal(0,0.0.5) & 20 & 20 & 0.3   &  0.495 $\Mypm$   0.285 & 0.489 $\Mypm$  0.288 & 0.488 $\Mypm$   0.287 & 0.485 $\Mypm$  0.286 & 0.516 $\Mypm$   0.289 \\ 
							Lognormal(0,0.5) & 50 &	50 & 0.3  & 0.476 $\Mypm$   0.296 & 0.473 $\Mypm$   0.293 & 0.468 $\Mypm$   0.287 & 0.47 $\Mypm$   0.292 & 0.487 $\Mypm$   0.296 \\ 
							Lognormal(0,0.25) & 20 & 20 & 0.3  & 0.516 $\Mypm$   0.293 & 0.526 $\Mypm$   0.306 & 0.524 $\Mypm$   0.303 & 0.524 $\Mypm$   0.306 & 0.518 $\Mypm$   0.286 \\ 
							Lognormal(0,0.25) & 50 & 50 & 0.3  & 0.491 $\Mypm$   0.289 & 0.500 $\Mypm$   0.296 & 0.496 $\Mypm$   0.295 & 0.498 $\Mypm$   0.295 & 0.494 $\Mypm$   0.289 \\ 
							\hline

						\end{tabular}}

						\scalebox{0.8}{
							\begin{tabular}{ccccccccc}			
								\hline
								Method: & &	& &	 Energy distance   & Kernel & Kernel  \\
								& &	& &	 $\alpha=1$ & Gaussian $\sigma=1$  & $Laplacian$ $\sigma=1$    \\
								Comparative	& $n_1$ & $n_2$ & Censoring rate   & $\overline{x}\Mypm \sigma$ & $\overline{x}\Mypm \sigma$ &  $\overline{x}\Mypm \sigma$
								\\						
								\hline	 
								Lognormal(0,0.5) & 20 &	20 & 0.1 & 0.490 $\Mypm$   0.287  & 0.490 $\Mypm$   0.283 & 0.493 $\Mypm$   0.287   \\ 
								Lognormal(0,0.5) & 50 &	50 & 0.1 & 0.503 $\Mypm$   0.283 & 0.500 $\Mypm$   0.283 & 0.500 $\Mypm$   0.283    \\ 
								Lognormal(0,0.25) & 20 & 20 & 0.1 & 0.481 $\Mypm$   0.294 & 0.481 $\Mypm$   0.294 & 0.482 $\Mypm$   0.296  \\ 
								Lognormal(0,0.25) & 50 & 50 & 0.1  & 0.517 $\Mypm$   0.289  & 0.517 $\Mypm$   0.291 & 0.516 $\Mypm$   0.288  \\ 
								Lognormal(0,0.0.5) & 20 & 20 & 0.3  & 0.495 $\Mypm$   0.288  & 0.495 $\Mypm$   0.287 & 0.497 $\Mypm$   0.287  \\ 
								Lognormal(0,0.5) & 50 &	50 & 0.3  & 0.482 $\Mypm$   0.293  & 0.482 $\Mypm$   0.294 & 0.488 $\Mypm$   0.297  \\ 
								Lognormal(0,0.25) & 20 & 20 & 0.3 & 0.522 $\Mypm$   0.293  & 0.526 $\Mypm$   0.298  & 0.526 $\Mypm$   0.298  \\ 
								Lognormal(0,0.25) & 50 & 50 & 0.3  & 0.504 $\Mypm$   0.291 & 0.501 $\Mypm$   0.296  & 0.502 $\Mypm$   0.296  \\ 
								\hline		
							\end{tabular}}
						\end{table}

\section{Additional content}
\label{appendix:F}
The Bessel functions of the second order $\Gamma(\cdot)$ (see  Table \ref{tab:tabla1})  are solutions of the Bessel differential equations that have a singularity at $x=0$. Bessel's differential equations are defined as follows:

$$ x^2 \frac{d^2 \Gamma}{dx^2} + x \frac{d\Gamma}{dx} + \left(x^2 - \alpha^2 \right)\Gamma = 0 $$

for an arbitrary complex number $\alpha$,  the order of the Bessel function.

\end{appendices}

\newpage

\bibliographystyle{ims}
\bibliography{citations}

\end{document}